\newtheorem{lemma}{Lemma}
\newtheorem{theorem}{Theorem}
\newtheorem{proposition}[theorem]{Proposition}
\newtheorem{cor}[theorem]{Corollary}
\theoremstyle{definition}
\newtheorem{krule}{Reduction Rule}
\theoremstyle{remark}
\newtheorem{rem}{Remark}
\newcommand{\2}{\vspace{0.15cm}}
    \newenvironment{myindentpar}[1]%
     {\begin{list}{}%
             {\setlength{\leftmargin}{#1}}%
             \item[]%
     }
     {\end{list}}
\title{Parameterizations of Test Cover with Bounded Test Sizes}
\author{R. Crowston, G. Gutin, M. Jones and G. Muciaccia\\
{\small Royal Holloway, University of London}\\[-3pt]
{\small Egham, Surrey, TW20 0EX, UK}\\[-3pt]
{\small \url{{robert|gutin|markj|G.Muciaccia}@cs.rhul.ac.uk}}\\
\and A. Yeo\\
{\small University of Johannesburg}\\[-3pt]
{\small Auckland Park, 2006 South Africa}\\[-3pt]
{\small \url{andersyeo@gmail.com}}\\
}
\begin{document}
\maketitle

\begin{abstract}
In the {\sc Test Cover} problem we are given a hypergraph $H=(V, \mathcal{E})$ with $|V|=n, |\mathcal{E}|=m$, and we assume that $\mathcal{E}$ is a \emph{test cover}, i.e. for every pair of vertices $x_i, x_j$, there exists an edge $e \in \mathcal{E}$ such that $|\{x_i,x_j\}\cap e|=1$. The objective is to find a minimum subset of $\mathcal{E}$ which is a test cover. The problem is used for identification across many areas, and is NP-complete. From a parameterized complexity standpoint, many natural parameterizations of {\sc Test Cover} are either $W[1]$-complete or have no polynomial kernel unless $coNP\subseteq NP/poly$, and thus are unlikely to be solveable efficiently.

However, in practice the size of the edges is often bounded. In this paper we study the parameterized complexity of {\sc Test-$r$-Cover}, the restriction of {\sc Test Cover} in which each edge contains at most $r \ge 2$ vertices. In contrast to the unbounded case, we show that the following below-bound parameterizations of {\sc Test-$r$-Cover} are fixed-parameter tractable with a polynomial kernel: (1) Decide whether there exists a test cover of size $n-k$, and (2) decide whether there exists a test cover of size $m-k$, where $k$ is the parameter.
In addition, we prove a new lower bound $\lceil \frac{2(n-1)}{r+1} \rceil$ on the minimum size of a test cover when the size of each edge is bounded by $r$. {\sc Test-$r$-Cover} parameterized above this bound is unlikely to be fixed-parameter tractable; in fact, we show that it is para-NP-complete, as it is NP-hard to decide whether an instance of {\sc Test-$r$-Cover} has a test cover of size exactly  $\frac{2(n-1)}{r+1}$.
\end{abstract}

\section{Introduction}

The input to the {\sc Test Cover} problem consists of a hypergraph $H=(V,\mathcal{E})$,
with vertex set $V=\{x_1,\dots,x_n\}$ and edge set $\mathcal{E}=\{e_1,\dots,e_m\}$
\footnote{Notice that, in the literature, the vertices are also called
{\em items} and the edges {\em tests}. We provide basic terminology and notation on hypergraphs in Section \ref{sec:prelim}.}.
We say that
an edge $e_q$ {\em separates} a pair of vertices $x_i,x_j$ if $|\{x_i,x_j\}\cap e_q|=1$. A subcollection ${\cal T}\subseteq {\cal E}$
is a {\em test cover} if each pair of distinct vertices $x_i,x_j$ is separated by an edge in ${\cal T}$.
The objective is to find a test cover of minimum size, if one exists. Since it is easy to decide,
in polynomial time, whether the collection ${\cal E}$ itself is a test cover,
henceforth we will assume that ${\cal E}$ is a test cover.

{\sc Test Cover} arises naturally in the following general setting of
identification problems: Given a set of items (which corresponds to the set of vertices) and a set
of binary attributes that may or may not occur in each item, the aim
is to find the minimum size subset of attributes (corresponding to a minimum test cover) such that each
item can be uniquely identified from
the list of attributes it has from this subset.
{\sc Test Cover} arises in fault analysis, medical diagnostics, pattern recognition and
biological  identification (see, e.g., \cite{HalHalRav01,HalMinRav01,MorSha85}).

The {\sc Test Cover} problem has also been studied extensively from an algorithmic viewpoint.
The problem is NP-hard, as was shown by Garey and Johnson \cite{GareyJohnson}.
There is an $O(\log n)$-approximation algorithm for
the problem \cite{MorSha85} and there is no $o(\log n)$-approximation algorithm unless P=NP \cite{HalHalRav01}.
Often, in practice, the sizes of all edges are bounded from above by a constant $r \ge 2$ \cite{De,HalHalRav01}.
For such cases the problem remains NP-hard, even when $r=2$, but the approximation guarantee can be improved to $O(\log r)$ \cite{HalHalRav01}.

Initially, research in parameterized algorithmics\footnote{We provide basic notions on parameterized algorithmics in the end of this section.} considered mainly
standard parameterizations, where the parameter is the size of the
solution. In the last decade, this situation has changed: many papers deal
with structural parameters such as treewidth or parameterizations above and
below tight bounds. Research on parameterizations above and below tight
bounds was initiated by Mahajan and Raman \cite{MR99}, but the systematic
study of such parameterizations started only after the publication of
\cite{MahajanRamanSikdar09}. This resulted in new methods and
approaches, solving most of the open problems stated in \cite{MR99} and
\cite{MahajanRamanSikdar09}, see, e.g.,
\cite{AloGutKimSzeYeo11,CFGJRTY,CroGutJonYeo12,CroJonMni,GutKimMniYeo}.

The complexity of the following four parameterizations of {\sc Test Cover} were first studied in \cite{CrowstonGJSY12,GutinMY12}, in all of which $k$ is
the parameter:
\begin{description}
  \item[{\sc TestCover}($k$):] Is there a test cover with at most $k$ edges?
  \item[{\sc TestCover}($n-k$):] Is there a test cover with at most $n-k$ edges?
  \item[{\sc TestCover}($m-k$):] Is there a test cover with at most $m-k$ edges?
  \item[{\sc TestCover}($\log n + k$):] Is there a test cover with at most $\log n + k$ edges?
\end{description}
Whilst {\sc TestCover}($k$) is a standard parameterization, {\sc TestCover}($n-k$) and {\sc TestCover}($m-k$) are parameterizations below
tight upper bounds: clearly $m$ is a tight upper bound on the minimum size of a test cover and it is not hard to prove that $n-1$ is another
tight upper bound \cite{Bon72} (see also Corollary \ref{cor:bondy}). Finally, {\sc TestCover}($\log n + k$) is a parameterization above a tight lower bound
$\lceil \log n\rceil$ on the minimum size of a test cover \cite{HalHalRav01}.

It is clear that {\sc TestCover}($k$) is fixed-parameter tractable; it follows immediately from the tight lower bound $\lceil \log n\rceil$.
It is proved in \cite{CrowstonGJSY12} that {\sc TestCover}($n-k$) is fixed-parameter tractable. The proof of this result is quite nontrivial
and yields an algorithm whose runtime is $f(k)(n+m)^{O(1)}$, where $f(k)$ is a function growing impractically fast.
It is also proved in \cite{CrowstonGJSY12} that {\sc TestCover}($m-k$) and {\sc TestCover}($\log n + k$) are W[1]-hard.
Since {\sc TestCover}($k$) and {\sc TestCover}($n-k$) are fixed-parameter tractable, it is natural to ask whether they admit polynomial-size
kernels. The authors of \cite{GutinMY12} answered this question by proving that, unless $coNP\subseteq NP/poly$,
neither {\sc TestCover}($k$) nor {\sc TestCover}($n-k$) admits a polynomial-size kernel.

Thus, papers \cite{CrowstonGJSY12,GutinMY12} demonstrated that the four parameterizations of {\sc Test Cover} are not easy to solve:
even those of them which are fixed-parameter tractable do not admit polynomial-size kernels. This gives another theoretical explanation of
the fact that {\sc Test Cover} is not easy to solve in practice \cite{Debnb,FahTie}. Thus, it is natural to study important special cases
of {\sc Test Cover}. It turns out that often the sizes of all edges are bounded from above by a constant $r$: De Bontridder
{\em et al.} \cite{De} observed that ``this is the common restriction" for the problem and provide, as an example, protein identification.
A question is whether the parameterizations of {\sc Test Cover} become easier in this case. Already \cite{GutinMY12} indicated that this can
be true by proving that {\sc TestCover}($k$) does admit a polynomial-size kernel if $r$ is a constant. We will denote the special cases of the
four parameterizations by {\sc Test-$r$-Cover}($k$),
{\sc Test-$r$-Cover}($n-k$), {\sc Test-$r$-Cover}($m-k$), and {\sc Test-$r$-Cover}($\log n + k$).
Since the case $r=1$ is trivial, henceforth we will assume $r \ge 2$.

In this paper we will prove that not only is {\sc Test-$r$-Cover}($m-k$) fixed-parameter tractable, but {\sc Test-$r$-Cover}($m-k$)
and {\sc Test-$r$-Cover}($n-k$) also  admit polynomial-size kernels. There is no interest in studying {\sc Test-$r$-Cover}($\log n + k$) as
$\lceil \log n\rceil$ is no longer a tight lower bound. 
Instead, we prove that $\lceil \frac{2(n-1)}{r+1} \rceil$ is a tight lower bound and study the parameterization {\sc Test-$r$-Cover}($\frac{2(n-1)}{r+1} + k$): Is there a test cover with at most $\frac{2(n-1)}{r+1} + k$ edges, where $k$ is the parameter? We prove that it is NP-hard even to decide whether there is a test cover of size $\frac{2(n-1)}{r+1}$. It follows that {\sc Test-$r$-Cover}($\frac{2(n-1)}{r+1} + k$) is para-NP-complete. 
 Note that lower bounds play a key
role in designing branch-and-bound algorithms for {\sc Test Cover} \cite{Debnb,FahTie} and so
our new lower bound may be of interest for practical solutions of {\sc Test-$r$-Cover}.

The rest of the paper is organized as follows. In the remainder of this section we provide necessary basic terminology on parameterized algorithmics. In Section \ref{sec:prelim}, we provide necessary terminology on hypergraphs and
prove some preliminary results. In Section \ref{sec:ALB}, we show that $\lceil \frac{2(n-1)}{r+1} \rceil$ is a tight lower bound and that
{\sc Test-$r$-Cover}($\frac{2(n-1)}{r+1}+k$) is para-NP-complete.
In Section \ref{sec:m-kfpt}, we give a short proof that {\sc Test-$r$-Cover}($m-k$) is fixed-parameter tractable.
In Sections \ref{sec:m-k} and \ref{sec:n-k}, we prove that {\sc Test-$r$-Cover}($m-k$) and {\sc Test-$r$-Cover}($n-k$) admit
polynomial-size kernels.
In Section \ref{sec:d}, we discuss open problems.

{\bf Basics on Parameterized Complexity.} A parameterized problem $\Pi$ can be considered as a set of pairs
$(I,k)$ where $I$ is the \emph{problem instance} and $k$ (usually a nonnegative
integer) is the \emph{parameter}.  $\Pi$ is called
\emph{fixed-parameter tractable (fpt)} if membership of $(I,k)$ in
$\Pi$ can be decided by an algorithm of runtime $O(f(k)|I|^c)$, where $|I|$ is the size
of $I$, $f(k)$ is an arbitrary function of the
parameter $k$ only, and $c$ is a constant
independent from $k$ and $I$. 
Let $\Pi$ and $\Pi'$ be parameterized
problems with parameters $k$ and $k'$, respectively. An
\emph{fpt-reduction $R$ from $\Pi$ to $\Pi'$} is a many-to-one
transformation from $\Pi$ to $\Pi'$
that maps an instance $(I,k)$ of $\Pi$ to an instance $(I',k')$ of $\Pi'$, such that (i) $(I,k)\in \Pi$ if
and only if $(I',k')\in \Pi'$ with $k'\le g(k)$ for a fixed
function $g$, and (ii) $R$ is of complexity
$O(f(k)|I|^c)$.


$\Pi$ is in \emph{para-NP} if membership of $(I,k)$ in
$\Pi$ can be decided by a nondeterministic Turing machine in time $O(f(k)|I|^c)$,
where $|I|$ is the size of $I$, $f(k)$ is an arbitrary function of the
parameter $k$ only,
and $c$ is a constant independent from $k$ and $I$. A parameterized problem $\Pi'$ is {\em
para-NP-complete} if it is in para-NP and for any parameterized
problem $\Pi$ in para-NP there is an fpt-reduction from $\Pi$ to
$\Pi'$. It is well-known that a
parameterized problem $\Pi$ belonging to para-NP is para-NP-complete if we can reduce an
NP-complete problem to the subproblem of $\Pi$ when the parameter is equal to some
constant \cite{FlumGrohe06}.

Given a parameterized problem $\Pi$,
a \emph{kernelization of $\Pi$} is a polynomial-time
algorithm that maps an instance $(I,k)$ to an instance $(I',k')$ (the
\emph{kernel}) such that (i)~$(I,k)\in \Pi$ if and only if
$(I',k')\in \Pi$, and (ii)~ $|I'|+k'\leq g(k)$ for some
function $g$ of $k$ only. We call $g(k)$ the {\em size} of the kernel.
It is well-known \cite{DowneyFellows99,FlumGrohe06} that a decidable parameterized problem $\Pi$ is fixed-parameter
tractable if and only if it has a kernel. Polynomial-size kernels are of
main interest, due to applications \cite{DowneyFellows99,FlumGrohe06,Niedermeier06}, but unfortunately not all fixed-parameter problems
have such kernels unless  coNP$\subseteq$NP/poly, see, e.g., \cite{BDFH09,DLS09}. 


\section{Additional terminology, notation and preliminaries}\label{sec:prelim}

%

We use usual hypergraph terminology. Let $H=(V,\mathcal{E})$ be a
hypergraph. The degree of a vertex $x$ is the cardinality of
$\{e\in\mathcal{E}:x\in e\}$. Given $X\subseteq V$, define the neighborhood
$N_1(X)=\{u\in V\setminus X:\exists e\in\mathcal{E} \ \exists x\in X \text{ such that } \{x,u\} \subseteq e \}$,
$N_1[X]=N_1(X)\cup X$ and $N_j[X]=N_1[N_{j-1}[X]]$.
Given $\mathcal{F}\subseteq\mathcal{E}$, define the neighborhood
$N_1({\cal F}) = \{e \in
 {\cal E} \setminus {\cal F} \; : \; \exists f \in {\cal F}, \; \; f \cap e
 \not= \emptyset \}$,
 $N_1[\mathcal{F}]=N_1(\mathcal{F})\cup \mathcal{F}$ and $N_j[\mathcal{F}]=N_1[N_{j-1}[\mathcal{F}]]$.
 For $\mathcal{F}\subseteq\mathcal{{E}}$, $H[\mathcal{F}]$ is the hypergraph with vertex set $V(H[\mathcal{F}])=\{x\in V:\exists e\in\mathcal{F} \text{ such that } x\in e\}$ and edge set $\mathcal{E}(H[\mathcal{F}])=\mathcal{F}$.
 Similarly, given a set of vertices $X\subseteq V$, $H[X]$ is the hypergraph with vertex set $V(H[X])=X$ and edge set $\mathcal{E}(H[X])=\{e\cap X : e \in\mathcal{E}, e\cap X\neq \emptyset \}$.

%
%
%

We reuse some terminology from \cite{CrowstonGJSY12}.
For a subset $\mathcal{T} \subseteq \mathcal{E}$, the \emph{classes induced by $\mathcal{T}$} are the maximal subsets $C_i$ of $V$ such that no pair of vertices in $C_i$ is separated by an edge in $\mathcal{T}$. Observe that the classes induced by $\mathcal{T}$ form a partition of $V$.

We say that $\mathcal{E}$ \emph{separates} $X\subseteq V$ and $Y\subseteq V$, $X \cap Y = \emptyset$, if for every pair $(x,y)$, with $x\in X$ and $y\in Y$, $x$ and $y$
are separated by an edge of $\mathcal{E}$. We say that $\mathcal{E}$ \emph{isolates} $X\subseteq V$ if it separates $X$ and $V\setminus X$.
We say an edge $e$ \emph{cuts} $X \subseteq V$ if $X \cap e \neq \emptyset$ and $X \setminus e \neq \emptyset$.

For an integer $t$, we use $[t]$ to denote the set $\{1,2, \dots, t\}$.

\begin{lemma}\label{lem:classes}
If $\cal E$ induces $t \ge 2$ classes in a hypergraph $H=(V,{\cal E})$ and $i\in
[t-1]$ then there is a subset $\cal F$ of $\cal E$ with $i$ edges that
induces at least $i+1$ classes.
\end{lemma}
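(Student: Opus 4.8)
The plan is to prove, by induction on $i$, the slightly more convenient statement that for every $i\in\{0,1,\dots,t-1\}$ there is a subset $\mathcal{F}\subseteq\mathcal{E}$ with $|\mathcal{F}|=i$ inducing at least $i+1$ classes; the lemma is then the case $i\in[t-1]$. The base case $i=0$ is immediate: $\mathcal{F}=\emptyset$ induces the single class $V$, and $1\ge 0+1$.

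Before the inductive step I would record one monotonicity observation: if $\mathcal{T}\subseteq\mathcal{T}'$, then every pair separated by $\mathcal{T}$ is separated by $\mathcal{T}'$, so the partition induced by $\mathcal{T}'$ refines the partition induced by $\mathcal{T}$, and in particular adding edges never decreases the number of induced classes. Now suppose $0\le i\le t-2$ and let $\mathcal{F}$ be a set of $i$ edges inducing $s\ge i+1$ classes. If $s\ge i+2$, pick any $e\in\mathcal{E}\setminus\mathcal{F}$; by monotonicity $\mathcal{F}\cup\{e\}$ has $i+1$ edges and still induces at least $i+2$ classes, and we are done. (Here $\mathcal{E}\setminus\mathcal{F}\ne\emptyset$ since $i\le t-2< m$.)

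In the remaining case $s=i+1$, we have $s=i+1\le t-1<t$, so the partition induced by $\mathcal{F}$ is strictly coarser than the one induced by all of $\mathcal{E}$. Hence some class $C$ of $\mathcal{F}$ contains a pair $x,y$ separated by $\mathcal{E}$, and therefore by some edge $e\in\mathcal{E}$. Such an $e$ cannot lie in $\mathcal{F}$, since it separates a pair that $\mathcal{F}$ does not; and in $\mathcal{F}\cup\{e\}$ the vertices $x$ and $y$ fall into different classes, so $C$ is split and $\mathcal{F}\cup\{e\}$ induces at least $s+1=i+2$ classes. Taking $\mathcal{F}\cup\{e\}$ completes the induction. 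The only points needing a little care are the monotonicity claim and checking that the newly chosen edge is genuinely outside $\mathcal{F}$ so that $|\mathcal{F}|$ grows by exactly one; neither is a real obstacle, as the whole argument rests on the single idea that while $\mathcal{F}$ induces fewer than $t$ classes there is always an edge of $\mathcal{E}$ splitting one of its classes.
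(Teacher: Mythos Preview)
Your proof is correct and takes the same inductive approach as the paper's: build $\mathcal{F}$ one edge at a time, at each step adding an edge of $\mathcal{E}$ that separates some pair not yet separated by $\mathcal{F}$; the paper simply starts the induction at $i=1$ and does not bother to split off your subcase $s\ge i+2$. One minor slip worth flagging: the parenthetical justification ``$i\le t-2<m$'' is not valid in general (with $m$ edges one can induce up to $2^m$ classes, so $t\le m+1$ need not hold), but this does no real damage---in the subcase $s\ge i+2$ you can just as well pick a splitting edge whenever $s<t$, and the residual situation $s=t$ with $\mathcal{F}=\mathcal{E}$ only arises when the lemma would be asking for a subset with more edges than $\mathcal{E}$ has, a defect of the statement rather than of your argument.
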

\begin{proof}
By induction on $i\in [t-1]$. To see that the lemma holds for $i=1$ set
${\cal F}=\{e\}$, where $e$ is any edge of $\cal E$ with less than $|V|$
vertices. Let $\cal F$ be a subset of $\cal E$ with $i-1$ edges that
induces at least $i$ classes, 
let $x,y$ be vertices separated by $\cal E$
not separated by ${\cal F}$, and let $e$ be an edge separating $x$ and $y$.
 It remains to
observe that ${\cal F}\cup \{e\}$ induces at least $i+1$ classes.
\end{proof}

Observe that $\cal E$ is a test cover if and only if it induces $n$
classes. This and the lemma above imply the following:

\begin{cor}\label{cor:bondy}\cite{Bon72}
If $\cal E$ is a test cover in a hypergraph $H=(V,{\cal E})$ then there is
a subset $\cal F$ of $\cal E$ with at most $n-1$ edges which is also a test
cover.
\end{cor}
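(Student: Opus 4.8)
The plan is to derive the corollary from the observation stated immediately before it together with Lemma~\ref{lem:classes}. First I would record the equivalence ``$\mathcal{E}$ is a test cover $\iff$ $\mathcal{E}$ induces $n$ classes'' more explicitly: if $\mathcal{E}$ is a test cover then every pair of distinct vertices is separated by some edge, so each induced class contains at most one vertex; since the induced classes partition $V$ and $|V|=n$, there are exactly $n$ of them. Conversely, if $\mathcal{E}$ induces $n$ classes, then a partition of an $n$-element set into $n$ parts consists of singletons, so every pair of distinct vertices lies in different classes and is therefore separated.

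Next, assuming $n\ge 2$ (for $n\le 1$ there are no pairs to separate and $\mathcal{F}=\emptyset$ works), I would apply Lemma~\ref{lem:classes} with $t=n$ — legitimate since $\mathcal{E}$, being a test cover, induces $t=n\ge 2$ classes — and with $i=t-1=n-1$, which indeed lies in $[t-1]$. This produces a subset $\mathcal{F}\subseteq\mathcal{E}$ with exactly $n-1$ edges that induces at least $n$ classes.

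Finally, because the classes induced by $\mathcal{F}$ form a partition of the $n$-element set $V$, ``at least $n$ classes'' forces ``exactly $n$ classes''. By the equivalence from the first step, $\mathcal{F}$ is a test cover, and it has at most $n-1$ edges, which is precisely the claim. There is no genuine obstacle in this argument; the only points meriting a moment's care are the trivial small-$n$ cases and checking that the index $i=n-1$ falls within the admissible range $[t-1]$ of Lemma~\ref{lem:classes}, which it does exactly when $n\ge 2$.
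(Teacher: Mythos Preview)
Your proposal is correct and follows exactly the route the paper intends: the paper states the equivalence ``$\mathcal{E}$ is a test cover iff it induces $n$ classes'' and then says the corollary follows from this together with Lemma~\ref{lem:classes}. You have simply spelled out the details (including the harmless edge case $n\le 1$ and the check that $i=n-1\in[t-1]$), which the paper leaves implicit.
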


\begin{cor}\label{cor:sep}
In a hypergraph $H=(V,{\cal E})$, let $X, Y\subseteq V$ be such that $X\cap Y=\emptyset$ and ${\cal E}$ separates 
$X$ and $Y$.
Then there is a subset $\cal F$
of $\cal E$ that separates $X$ and $Y$ and has at most $t_X + t_Y -1$ edges, where $t_X$ ($t_Y$) is the number of classes induced by ${\cal E}$  that intersect $X$ ($Y$).
\end{cor}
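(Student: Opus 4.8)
The plan is to prove the statement by induction on $t_X + t_Y$, using throughout the elementary fact that every class induced by $\mathcal{E}$ is, for each edge $e \in \mathcal{E}$, either contained in $e$ or disjoint from $e$ (otherwise $e$ would separate two vertices of a common class, contradicting the definition of a class). In particular, fixing any $e \in \mathcal{E}$ partitions the classes induced by $\mathcal{E}$ that meet $X$ into those meeting $X \cap e$ and those meeting $X \setminus e$, and likewise for $Y$; this additivity of the class counts is what will keep the bound from blowing up.

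First I would dispose of the degenerate cases: we may assume $X$ and $Y$ are nonempty, so that $t_X, t_Y \ge 1$, since otherwise the claim is vacuous or trivial. For the base case $t_X = t_Y = 1$, the set $X$ lies inside a single class $C$ and $Y$ inside a single class $D \ne C$; picking any edge $e \in \mathcal{E}$ that separates some $x \in X$ from some $y \in Y$, the atomicity of classes forces all of $X$ and all of $Y$ onto opposite sides of $e$, so $\mathcal{F} = \{e\}$, of size $1 = t_X + t_Y - 1$, separates $X$ and $Y$.

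For the inductive step, assume $t_X + t_Y \ge 3$; by symmetry between $X$ and $Y$ we may assume $t_X \ge 2$. Then $X$ meets two distinct classes, so there is an edge $e \in \mathcal{E}$ with $X_1 := X \cap e \neq \emptyset$ and $X_2 := X \setminus e \neq \emptyset$. Put $Y_1 = Y \cap e$ and $Y_2 = Y \setminus e$, and let $t_{X_i}, t_{Y_i}$ be the numbers of classes of $\mathcal{E}$ meeting $X_i$, resp.\ $Y_i$. By the atomicity remark, $t_{X_1} + t_{X_2} = t_X$ and $t_{Y_1} + t_{Y_2} = t_Y$, with $t_{X_1}, t_{X_2} \ge 1$; hence $t_{X_i} + t_{Y_i} < t_X + t_Y$ for $i = 1,2$, and the induction hypothesis (or, when $Y_i = \emptyset$, the trivial choice $\mathcal{F}_i = \emptyset$, valid since $t_{X_i} \ge 1$) yields $\mathcal{F}_i \subseteq \mathcal{E}$ separating $X_i$ from $Y_i$ with $|\mathcal{F}_i| \le t_{X_i} + t_{Y_i} - 1$. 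Then I would take $\mathcal{F} = \{e\} \cup \mathcal{F}_1 \cup \mathcal{F}_2$ and verify that it separates $X$ from $Y$: a pair $x \in X$, $y \in Y$ on the same side of $e$ is separated within $\mathcal{F}_1$ or $\mathcal{F}_2$, while a pair straddling $e$ is separated by $e$ itself. Finally, $|\mathcal{F}| \le 1 + (t_{X_1} + t_{Y_1} - 1) + (t_{X_2} + t_{Y_2} - 1) = t_X + t_Y - 1$.

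The one point that must be handled correctly — and the only place the argument could go wrong — is the decision to recurse on the pairs $(X \cap e,\, Y \cap e)$ and $(X \setminus e,\, Y \setminus e)$ rather than on $(X \cap e,\, Y)$ and $(X \setminus e,\, Y)$: the naive choice would charge the cost of separating $Y$ twice and yield only the weaker bound $t_X + 2t_Y - 1$. Splitting $Y$ along $e$ as well makes the two subproblems genuinely independent, since every ``crossing'' pair is already separated by $e$, and the atomicity of classes is exactly what guarantees that this double split does not increase $t_X + t_Y$, so that the counts add up precisely.
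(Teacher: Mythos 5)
Your proof is correct: the atomicity observation (every class induced by $\mathcal{E}$ is contained in or disjoint from each edge) does make the class counts additive under the split $X=(X\cap e)\cup(X\setminus e)$, $Y=(Y\cap e)\cup(Y\setminus e)$; the subproblem parameters strictly decrease because both $X\cap e$ and $X\setminus e$ are nonempty; and the accounting $1+(t_{X_1}+t_{Y_1}-1)+(t_{X_2}+t_{Y_2}-1)=t_X+t_Y-1$ goes through, with the $Y_i=\emptyset$ case handled properly. It is, however, a genuinely different route from the paper's. The paper deduces the corollary in one line from Lemma~\ref{lem:classes} (a family inducing $t\ge 2$ classes contains, for any $i\le t-1$, a subfamily of $i$ edges inducing at least $i+1$ classes): it passes to the induced subhypergraph $H[X\cup Y]$, whose classes are exactly the traces on $X\cup Y$ of the classes of $\mathcal{E}$ --- there are precisely $t_X+t_Y$ of them, since by the separation hypothesis no class meets both $X$ and $Y$ --- so $t_X+t_Y-1$ restricted edges induce all of them, and these are then extended back to edges of $H$. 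That argument reuses existing machinery and actually gives slightly more: the chosen edges separate \emph{every} pair of vertices in $X\cup Y$ that $\mathcal{E}$ separates, not only the $X$--$Y$ pairs. Your divide-and-conquer induction on $t_X+t_Y$, splitting along an edge that cuts $X$, is self-contained: it avoids the induced-subhypergraph and edge-lifting step, works only with the classes of the original $\mathcal{E}$, and is equally constructive, so it would serve just as well in the algorithmic uses of the corollary (e.g.\ in Lemma~\ref{lem:degree} and Theorem~\ref{thm:m-kfpt}). Your closing remark also correctly isolates the one delicate choice, namely recursing on $(X\cap e,\,Y\cap e)$ and $(X\setminus e,\,Y\setminus e)$ rather than keeping $Y$ whole.
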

\begin{proof}
Apply Lemma \ref{lem:classes} to $H[X\cup Y]$ and then extend the obtained
edges of $H[X\cup Y]$ beyond $X\cup Y$ such that they correspond to edges
of $H.$
\end{proof}

\section{{\sc Test-$r$-Cover$(\frac{2(n-1)}{r+1}+k)$}}\label{sec:ALB}

\begin{proposition}\label{prop:bound}
If the size of every edge is at most $r$, then every test cover has at least $\lceil \frac{2(n-1)}{r+1} \rceil$ edges. This lower bound on the size of a test cover is tight.
Furthermore, for a test cover of size exactly the lower bound, every vertex is contained in at most two edges.
\end{proposition}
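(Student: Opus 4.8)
The plan is to get the lower bound from a one–line double count of vertex--edge incidences, then hand‑build an extremal instance, and finally read the structural claim off the equality case of that count.

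\emph{Lower bound.} Let $\mathcal{T}\subseteq\mathcal{E}$ be an arbitrary test cover and put $s=|\mathcal{T}|$. For $x\in V$ let $d_x$ be the number of edges of $\mathcal{T}$ containing $x$, and call $\{e\in\mathcal{T}:x\in e\}$ the \emph{signature} of $x$. Since $\mathcal{T}$ separates every pair, the signatures are pairwise distinct; hence at most one vertex has $d_x=0$, and the map sending a vertex with $d_x=1$ to its unique edge is injective, so at most $s$ vertices have $d_x=1$. Writing $n_0\le 1$ and $n_1\le s$ for the number of such vertices, the remaining $n-n_0-n_1$ vertices each contribute at least $2$ to the incidence sum, so $\sum_{x\in V}d_x\ge n_1+2(n-n_0-n_1)=2n-2n_0-n_1\ge 2(n-1)-s$. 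On the other hand $\sum_{x\in V}d_x=\sum_{e\in\mathcal{T}}|e|\le rs$. Hence $rs\ge 2(n-1)-s$, i.e.\ $(r+1)s\ge 2(n-1)$, and since $s$ is an integer, $s\ge\lceil\frac{2(n-1)}{r+1}\rceil$.

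\emph{Tightness.} I would exhibit instances meeting the bound. Take $n$ with $(r+1)\mid 2(n-1)$ and set $s=\frac{2(n-1)}{r+1}$; then $(r-1)s=(r+1)s-2s$ is even and $s\ge r$ once $n$ is large enough, so there is an $(r-1)$-regular simple graph $G$ on vertex set $[s]$. Regard $[s]$ as the set of edges to be built: give each $j\in[s]$ a private vertex placed only in edge $j$, give each edge $\{j,j'\}$ of $G$ a vertex placed in edges $j$ and $j'$, and add one extra vertex lying in no edge. Each edge then has size $1+\deg_G(j)=r$, the signatures $\{j\}$, $\{j,j'\}$ and $\emptyset$ are pairwise distinct, so $\mathcal{E}:=\{e_1,\dots,e_s\}$ is a test cover on $n=1+s+\frac{(r-1)s}{2}=1+\frac{(r+1)s}{2}$ vertices, and by the lower bound it is a minimum one. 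For $n$ with $(r+1)\nmid 2(n-1)$ a matching example comes from a small perturbation of this construction (e.g.\ permitting a bounded number of edges of size less than $r$), which I would spell out separately.

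\emph{Structural statement.} Suppose $\frac{2(n-1)}{r+1}$ is an integer and $\mathcal{T}$ is a test cover of exactly that size $s$, so that $rs=(r+1)s-s=2(n-1)-s$. Then the chain $2(n-1)-s=rs\ge\sum_{e\in\mathcal{T}}|e|=\sum_{x\in V}d_x\ge n_1+2(n-n_0-n_1)\ge 2(n-1)-s$ forces equality throughout. From $rs=\sum_{e}|e|$ every edge has size exactly $r$; from $2n-2n_0-n_1=2(n-1)-s$ together with $n_0\le 1$ and $n_1\le s$ we get $n_0=1$ and $n_1=s$; and from $\sum_{x}d_x=n_1+2(n-n_0-n_1)$ every vertex with $d_x\ge 2$ in fact has $d_x=2$. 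Hence no vertex lies in more than two edges of $\mathcal{T}$, as claimed.

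The step I expect to need the most care is fixing the scope of the final sentence: the structural claim is false if ``size exactly the lower bound'' is read as $\lceil\frac{2(n-1)}{r+1}\rceil$ while $\frac{2(n-1)}{r+1}\notin\mathbb{Z}$ --- for instance with $r=2$, $n=5$ the bound equals $3$ and $\mathcal{E}=\{\{x,a\},\{x,b\},\{x,c\}\}$ (together with a fifth, isolated vertex) is a minimum test cover in which $x$ has degree $3$. So the argument above, and presumably the intended statement, is confined to the divisible case $(r+1)\mid 2(n-1)$, which is precisely the regime in which Section~\ref{sec:ALB} goes on to study test covers of size $\frac{2(n-1)}{r+1}$. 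Beyond that the proof is routine: the incidence count is immediate and the only external input is the standard existence of $(r-1)$-regular simple graphs used in the tight construction.
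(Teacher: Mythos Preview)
Your double-counting lower bound is exactly the paper's argument, just phrased more explicitly via signatures; the paper writes the same inequality as $n \le 1 + m' + \frac{m'(r-1)}{2}$ and reads off the degree condition from the equality case in the same way. For tightness you take a genuinely different route: the paper builds a concrete $r\times r$ grid of vertices with the $2(r-1)$ ``row'' and ``column'' edges $e_q=\{x_{q,j}:j\in[r]\}$, $e'_s=\{x_{i,s}:i\in[r]\}$ (and iterates copies, sharing the isolated vertex, for larger $n$), whereas you realise the extremal incidence structure abstractly as an $(r-1)$-regular graph on the edge set. Your construction is more flexible --- it produces tight instances for every $s$ admitting an $(r-1)$-regular simple graph, not just $s=2t(r-1)$ --- at the mild cost of citing the existence of regular graphs; the paper's grid is self-contained and easier to picture and verify by hand.

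Your final paragraph is a genuine sharpening of the paper. The paper derives the degree-$\le 2$ claim only from equality in $n \le 1 + m' + \frac{m'(r-1)}{2}$, which indeed forces $m'=\frac{2(n-1)}{r+1}$ rather than its ceiling, and your $r=2$, $n=5$ star shows the proposition is literally false when $(r+1)\nmid 2(n-1)$. The paper only ever invokes the structural claim in the divisible regime (Lemmas~\ref{lem:nphard} and~\ref{lem:test2cover} both work with test covers of size exactly $\frac{2(n-1)}{r+1}$), so nothing downstream breaks, but you are right that the statement as written needs that restriction, and your proof makes this explicit where the paper does not.
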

\begin{proof}
We first prove that $\lceil \frac{2(n-1)}{r+1} \rceil$ is indeed a lower bound. 
Given a test cover of size $m'$, observe that at most one vertex may be contained in no edges. For each of the $m'$ edges, at most one vertex is contained in only that edge and every other vertex is contained in at least two edges. Hence 
$n \le 1 + m' + \frac{m'(r-1)}{2}$,
 which implies $m' \ge \frac{2(n-1)}{r+1}$. Observe that no vertex being contained in three edges is a necessary condition for the bound to be tight.

To see that this bound is tight, consider a set $V=\{x_{i,j}: i,j\in [r]\}$ of vertices, and a set
${\cal E}=\{e_q:\ q\in [r-1]\}\cup \{e'_s:\ s\in [r-1]\}$ of edges, where
$e_q=\{x_{q,j} :\ j\in [r]\}$, $e'_s=\{ x_{i,s} : i\in [r]\}$
(see Figure \ref{fig:prop1}).
 Since $n=r^2$, we have $|{\cal E}|=2(r-1)=\frac{2(n-1)}{r+1}.$
Consider two vertices $x_{i,j}$ and $x_{i',j'}$. If $i\neq i'$, then $x_{i,j}$ and $x_{i',j'}$ are separated
by $e_{\min\{i,i'\}}$ and if $j\neq j'$, then $x_{i,j}$ and $x_{i',j'}$ are separated
by $e'_{\min\{j,j'\}}$. Thus, ${\cal E}$ is a test cover of minimum possible size.
By using multiple copies of this construction (except for $x_{r,r}$), one can see that the bound is tight for arbitrarily large $n$.
\end{proof}

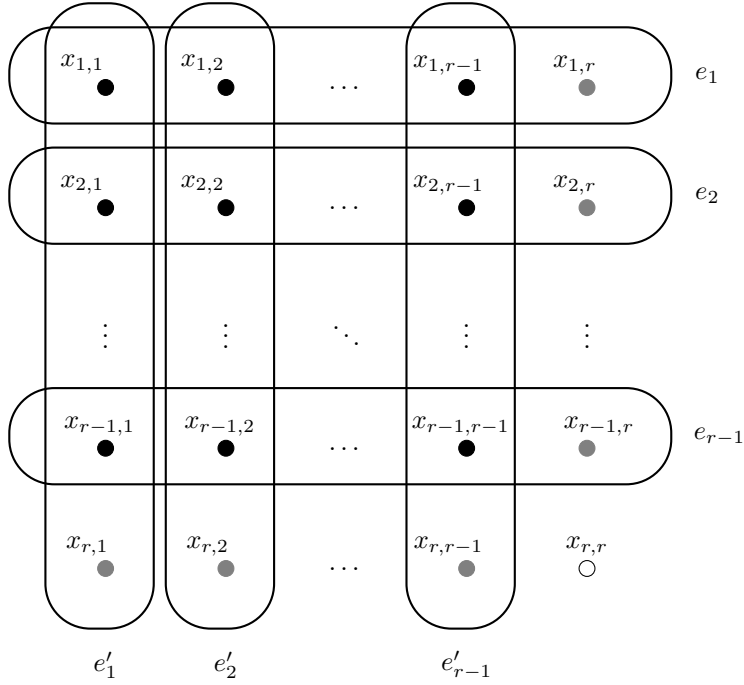
\begin{figure}[h]
\centering
\begin{tikzpicture}[scale=1.6]

\foreach \number in {0,-1,-3,-4}
{
\node at (2,\number) {$\dots$};
}

\foreach \number in {0,1,3,4}
{
\node at (\number, -2) {$\vdots$};
}

\node at (2,-2) {$\ddots$};

\node at (2.85,0.185) {$x_{1,r-1}$};
\node at (2.85,0.185-1) {$x_{2,r-1}$};

\node at (-0.05,0.185-3) {$x_{r-1,1}$};
\node at (0.95,0.185-3) {$x_{r-1,2}$};
\node at (2.95,0.185-3) {$x_{r-1,r-1}$};

\node at (3.9,0.185) {$x_{1,r}$};
\node at (3.9,0.185-1) {$x_{2,r}$};
\node at (4.1,0.185-3) {$x_{r-1,r}$};

\node at (-0.15,0.185-4) {$x_{r,1}$};
\node at (0.85,0.185-4) {$x_{r,2}$};
\node at (2.85,0.185-4) {$x_{r,r-1}$};

\node at (4,0.185-4) {$x_{r,r}$};

\node at (5,0.1) {$e_1$};
\node at (5,-0.9) {$e_2$};
\node at (5.1,-2.9) {$e_{r-1}$};

\node at (0,-4.8) {$e_{1}'$};
\node at (1,-4.8) {$e_{2}'$};
\node at (3,-4.8) {$e_{r-1}'$};

\tikzstyle{every node}=[draw,circle,fill=black,minimum size=6pt,inner sep=0pt]

\foreach \numberone in {1,2}
{
\foreach \numbertwo in {1,2}
{
\node at (\numberone-1,1-\numbertwo) [label=above left:$x_{\numbertwo,\numberone}$,font=\footnotesize] {};
}
}

\foreach \number in {1,2}
{
\node at (3,1-\number)  {};
}

\foreach \number in {0,1,3}
{
\node at (\number,-3)  {};
}

\tikzstyle{every node}=[draw,gray,circle,fill=gray,minimum size=6pt,inner sep=0pt]

\foreach \number in {0,-1,-3}
{
\node at (4,\number)  {};
}

\foreach \number in {0,1,3}
{
\node at (\number,-4)  {};
}

\tikzstyle{every node}=[draw,circle,fill=white,minimum size=6pt,inner sep=0pt]

\node at (4,-4) {};

\draw [thick,rounded corners=17pt] (-0.8,0.5) rectangle (4.7,-0.3);
\draw [thick,rounded corners=17pt] (-0.8,-0.5) rectangle (4.7,-1.3);
\draw [thick,rounded corners=17pt] (-0.8,-2.5) rectangle (4.7,-3.3);

\draw [thick,rounded corners=17pt] (-0.5,0.7) rectangle (0.4,-4.5);
\draw [thick,rounded corners=17pt] (0.5,0.7) rectangle (1.4,-4.5);
\draw [thick,rounded corners=17pt] (2.5,0.7) rectangle (3.4,-4.5);

\end{tikzpicture}
\caption{Illustration of the hypergraph $(V, \mathcal{E})$ in Proposition \ref{prop:bound}.
 Vertices of degree two are labelled in black, vertices of degree one in gray
and the vertex of degree zero in white.}\label{fig:prop1}
\end{figure}

From this lower bound, a kernel for {\sc Test-$r$-Cover}($k$), with less vertices than in \cite{GutinMY12}, immediately follows:

\begin{cor}
{\sc Test-$r$-Cover}$(k)$ admits a kernel with at most $k(r+1)/2+1$ vertices.
\end{cor}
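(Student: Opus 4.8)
The plan is that Proposition~\ref{prop:bound} already does all the work; the corollary is essentially its contrapositive, and the kernelization is a single arithmetic reduction rule.

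Given an instance $(H,k)$ of {\sc Test-$r$-Cover}$(k)$ with $H=(V,\mathcal{E})$ and $|V|=n$, the kernelization compares $n$ with $k(r+1)/2+1$. If $n\le k(r+1)/2+1$, it returns $(H,k)$ unchanged: this is already an equivalent instance with at most $k(r+1)/2+1$ vertices, so there is nothing more to do. If $n> k(r+1)/2+1$, then $\frac{2(n-1)}{r+1}>k$, so by Proposition~\ref{prop:bound} every test cover of $H$ has at least $\lceil\frac{2(n-1)}{r+1}\rceil>k$ edges; hence $(H,k)$ is a no-instance, and the kernelization returns a fixed trivial no-instance. Such no-instances of {\sc Test-$r$-Cover}$(k)$ of size bounded by a function of $k$ clearly exist (for example, the tightness construction of Proposition~\ref{prop:bound} on a vertex set of size $O(r^2+k)$ has every test cover of size exceeding $k$), so this second case is legitimate.

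The correctness check reduces to the chain $k\ge\lceil\tfrac{2(n-1)}{r+1}\rceil\ge\tfrac{2(n-1)}{r+1}$, which rearranges to $n\le \tfrac{k(r+1)}{2}+1$ with no need for a parity case analysis on $k$, $r$ or $n$; the running time is clearly polynomial, as the procedure only reads $n$ and then outputs either $(H,k)$ itself or a fixed instance. I do not expect any genuine obstacle: all of the mathematical content sits in Proposition~\ref{prop:bound}, and the corollary merely records the implication ``a test cover of size $\le k$ exists'' $\Rightarrow$ ``$n\le k(r+1)/2+1$''. The only minor points worth stating carefully are (i) the existence of bounded-size no-instances used above, and (ii) that if one additionally wants the \emph{whole} kernel (not just its vertex set) to be bounded by a function of $k$, one first applies the reduction rule of deleting repeated edges, after which $|\mathcal{E}|\le\sum_{i=1}^{r}\binom{n}{i}$, which is polynomial in $n$ and hence in $k$ for fixed $r$.
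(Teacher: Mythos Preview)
Your argument is correct and is exactly the approach the paper intends: the corollary is stated without proof, immediately after Proposition~\ref{prop:bound}, as a direct consequence of the lower bound $\lceil 2(n-1)/(r+1)\rceil$. Your additional remarks on the existence of a bounded no-instance and on bounding $|\mathcal{E}|$ are fine elaborations of points the paper leaves implicit.
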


It follows from the proof of Proposition \ref{prop:bound} that for a test cover of size $\frac{2(n-1)}{r+1}$, there must be exactly one vertex contained in no edges, each edge must contain one vertex of degree $1$, and every other vertex must be of degree $2$.

%

To prove 
the next lemma
 we give a reduction from
the {\sc $r$-Dimensional Matching} problem, which is one of Karp's 21 NP-complete problems \cite{Karp72}, for each $r\ge 3$.
To state the problem, we will give the following definitions. A collection $\cal E$ of edges of a hypergraph $H$ is called a {\em matching} if no two edges
of $\cal E$ have a common vertex. A matching is {\em perfect} if every vertex of $H$ belongs to an edge of the matching. A hypergraph $H$ is {\em $r$-partite $r$-uniform} if the vertex set $V(H)$ of $H$ can be partitioned into $r$ sets $X_1,\ldots ,X_r$ such that each edge of $H$ has exactly one vertex from each $X_i.$ In {\sc $r$-Dimensional Matching}, given an $r$-partite $r$-uniform hypergraph $H$, the aim is to decide whether $H$ has a perfect matching.

\begin{lemma} \label{lem:nphard}
{\sc Test-$r$-Cover}$(\frac{2(n-1)}{r+1}+k)$ is NP-hard for $k=0$ and $r\ge 3$.
\end{lemma}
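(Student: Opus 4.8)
The idea is to reduce from {\sc $r$-Dimensional Matching} on an $r$-partite $r$-uniform hypergraph $H$ with parts $X_1,\dots,X_r$, each of size $t$ (so $H$ has $rt$ vertices and some collection $\mathcal{E}_H$ of edges), to an instance of {\sc Test-$r$-Cover} in which a test cover of size exactly $\frac{2(n-1)}{r+1}$ exists if and only if $H$ has a perfect matching. By the final remark after Proposition~\ref{prop:bound}, such an extremal test cover is extremely rigid: exactly one vertex has degree $0$, every edge has exactly one private degree-$1$ vertex, and all remaining vertices have degree exactly $2$. So the reduction must build a hypergraph whose \emph{only} way to be separated by that few tests is to "spend" $t$ of the tests on a perfect matching of $H$.

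\textbf{Construction.} The plan is to take the vertex set of the new instance to be (roughly) $V(H)$ together with, for each candidate edge $e\in\mathcal{E}_H$, a fresh private degree-$1$ vertex $p_e$, plus a global gadget that forces the arithmetic $n-1 = \frac{2(n-1)}{r+1}\cdot\frac{r+1}{2}$ to work out, i.e.\ padding so that $n-1$ is divisible by $\frac{r+1}{2}$ and the target size equals $t$ plus the number of padding tests. The edges of the new instance are: one test $T_e := e\cup\{p_e\}$ of size $r+1$ for each $e\in\mathcal{E}_H$ — wait, that exceeds $r$; instead each candidate test is $e$ itself (size $r$) with $p_e$ added only via a separate mechanism, or we simply drop $p_e$ into $e$ by removing one "coordinate" and re-introducing it through the padding gadget. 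The cleanest route: make each candidate test have size exactly $r+1$ is not allowed, so instead I would let the construction use the tight example of Proposition~\ref{prop:bound} as a "backbone" and graft the matching gadget onto it, so that the $r^2$-style grid forces degree-$2$ structure everywhere except on a controlled set of vertices corresponding to $V(H)$, and the only freedom left is to choose, for each $X_i$, which edges of $H$ to include — separating all pairs within $\bigcup X_i$ with exactly $t$ tests forces those tests to partition $\bigcup X_i$, hence to form a perfect matching.

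\textbf{Two directions.} For the forward direction, given a perfect matching $M=\{e_1,\dots,e_t\}$ of $H$, I would take the corresponding $t$ tests together with the backbone tests of the tight construction and verify by the counting identity that this is a test cover of size exactly $\frac{2(n-1)}{r+1}$, using that $M$ separates every pair inside $\bigcup X_i$ (two vertices in the same part lie in distinct matching edges; two in different parts are separated because exactly one matching edge through one of them contains it) and that the backbone handles everything else. For the reverse direction, suppose a test cover $\mathcal{T}$ of size exactly the bound exists; by the rigidity remark every vertex has degree $\le 2$ and every test has a private degree-$1$ vertex. I would argue that the degree-$2$ constraint on the vertices of $\bigcup X_i$ together with the requirement that all $\binom{rt}{2}$-ish pairs among them be separated forces the $\mathcal{T}$-tests meeting $\bigcup X_i$ to induce a partition of it into independent (pairwise-separated) classes of size... and since each such test is an edge of $H$ of size $r$ hitting each part once, a partition of $\bigcup X_i$ into tests-of-$H$ is exactly a perfect matching.

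\textbf{Main obstacle.} The hard part will be getting the numerology and the gadgeting to cooperate simultaneously: I need $n\equiv 1 \pmod{(r+1)/2}$ (so that $\lceil\cdot\rceil$ is an equality and the extremal regime is attainable at all), I need the target size to be an \emph{affine} function of $t$ with the "$+k=0$" slack, and — most delicately — I need to rule out "cheating" test covers that separate $\bigcup X_i$ using tests that are \emph{not} edges of $H$ or that reuse a vertex's two slots in a way that still partitions things. This is exactly where the rigidity consequence of Proposition~\ref{prop:bound} (degree $\le 2$, one private vertex per test, a unique degree-$0$ vertex) does the heavy lifting: it collapses the space of extremal solutions down to "a perfect matching of $H$ plus the forced backbone," so the bulk of the work is a careful but routine check that the padding backbone can always be built (for every $r\ge 3$ and every $t$) and contributes the right, perfect-matching-independent number of tests.
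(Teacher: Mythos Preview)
Your high-level strategy---reduce from {\sc $r$-Dimensional Matching} and exploit the rigidity of extremal test covers coming from Proposition~\ref{prop:bound}---is exactly what the paper does. But your proposal does not actually contain a construction, and the two attempts you sketch do not work. The $p_e$-per-candidate-edge idea you yourself reject (edges become size $r+1$); and ``graft the matching gadget onto the $r\times r$ grid of Proposition~\ref{prop:bound}'' is too vague to evaluate: it is not clear which vertices of the grid are identified with $V(H)$, how the matching edges interact with the grid rows and columns, or why the numerology $\frac{2(n-1)}{r+1}$ comes out to exactly $t$ plus the backbone size.

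The missing idea is the following concrete backbone. Assume the part size $n'$ is divisible by $r-1$ (harmless padding). For each of the first $r-1$ parts $X_i$, partition $X_i$ into $\frac{n'}{r-1}$ blocks of $r-1$ consecutive vertices and make each block, together with one fresh private vertex $y_{i,p}$, into a backbone test of size exactly $r$. Add one further isolated vertex $y_0$. That is the entire backbone $\mathcal{E}'$; the $r$-th part $X_r$ is left untouched. Now $n=(r+1)n'+1$, so $\frac{2(n-1)}{r+1}=2n'=|\mathcal{E}'|+n'$, and a perfect matching $\mathcal{E}''\subseteq\mathcal{E}(H)$ has exactly $n'$ edges. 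This is why the arithmetic works with no extra padding.

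With this backbone the two directions become clean. Forward: $\mathcal{E}'\cup\mathcal{E}''$ separates everything (backbone handles pairs within the same $X_i$ for $i<r$ and isolates the $y$'s; the matching handles the rest). Backward: every $e_{i,p}$ is forced into any test cover (only edge separating $y_{i,p}$ from $y_0$), so the remaining $n'$ tests must come from $\mathcal{E}(H)$; each $x_{i,j}$ must be covered by one of them (else not separated from its private $y$); and the degree-$\le 2$ rigidity then forbids any two of these $n'$ edges from sharing a vertex (in $X_1,\dots,X_{r-1}$ a shared vertex would have degree $3$ counting the backbone; in $X_r$ a shared vertex would make some edge lose its required private degree-$1$ vertex). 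Hence they form a perfect matching. Your worry about ``cheating'' test covers that use non-$H$ tests disappears because the backbone tests are all forced, and the only remaining tests available are edges of $H$.
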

\begin{proof}

We give a reduction from {\sc $r$-Dimensional Matching} to {\sc Test-$r$-Cover}$(\frac{2(n-1)}{r+1}+k)$ for $k=0$. We assume we have an instance of {\sc $r$-Dimensional Matching} given by an $r$-partite $r$-uniform hypergraph $G$ with vertex set $V(G)=\{x_{i,j} :\ i\in [r], j\in [n']\}$, where $X_i=\{x_{i,j} : j\in [n']\}$ form the partition of $V(G)$, i.e. each edge has exactly one vertex in each $X_i$. (We may assume all $X_i$ have the same size, as otherwise there is no perfect matching.) The edge set of $G$ will be denoted by $\mathcal{E}(G)$.
Additionally, we may assume that $n'$ is divisible by $r-1$.

We now give the construction of $(V,\mathcal{E})$, the {\sc Test-$r$-Cover} instance.
Let the vertex set $V=V(G)\cup Y$, and the edge set $\mathcal{E}=\mathcal{E}(G)\cup \mathcal{E}'$, where $Y$ and $\mathcal{E}'$ are defined as follows:
\[ Y=\{y_{i,j} :\ i\in [r-1], j\in \{(r-1),2(r-1),...,n'\} \} \cup \{ y_0 \}
\]
\[ \mathcal{E}'=\{e_{i,p} :\ i\in [r-1], p\in \{(r-1),2(r-1),...,n'\} \}
\]
where $e_{i,p}=\{x_{i,p-r+2}, x_{i,p-r+3}, \ldots, x_{i,p}, y_{i,p}\}$ (see Figure \ref{fig:lem3}).
Note that $n=|V|=rn'+n'+1$.

We first show that if $\mathcal{E}'' \subseteq\mathcal{E}(G)$ is a perfect matching in $G$, then $\mathcal{E}'\cup \mathcal{E}''$ is a test cover of size $\frac{2(n-1)}{r+1}$. 
For any pair $x_{i,j}, x_{i',j'} \in V(G)$, 
if $i=i'$ then  $x_{i,j}, x_{i',j'}$ appear in different edges in $\mathcal{E}''$ and are separated, if $i <i'$ they are separated by $e_{i,(r-1)\lceil \frac{j}{r-1}\rceil}$, and if $i>i'$ they are separated by $e_{i',(r-1)\lceil \frac{j'}{r-1}\rceil}$.
Any pair $y_{i,j}, y_{i',j'} \in Y$ is separated by $e_{i,j}$.
As $\mathcal{E}'$ covers $V(G)$, every $x_{i,j} \in V(G)$ is separated from every $y_{i',j'} \in Y$.
Finally, $y_0$ is separated from every other vertex as it is the only vertex not covered by $\mathcal{E}' \cup \mathcal{E}''$.
Thus, $\mathcal{E}' \cup \mathcal{E}''$ is a test cover.
Since $|\mathcal{E}' \cup \mathcal{E}''|=2n'$ and $n=rn'+n'+1$, we have that $\mathcal{E}' \cup \mathcal{E}''$ is a test cover of size $\frac{2(n-1)}{r+1}$.

It remains to show that if $(V,\mathcal{E})$ has a test cover of size $\frac{2(n-1)}{r+1}$, then $G$ has a perfect matching.
Firstly, observe that to separate $y_0$ from $y_{i,p}$, every edge $e_{i,p}$ must be in the test cover. Hence $\mathcal{E}'$ is contained in any test cover for $(V,\mathcal{E})$. So a test cover is a set $\mathcal{E}'\cup \mathcal{E}''$, where $\mathcal{E}'' \subseteq \mathcal{E}(G)$.
Next, observe that for every vertex $x_{i,j}$, there must exist an edge in $\mathcal{E}''$ containing $x_{i,j}$, as otherwise there would be no edge separating $x_{i,j}$ from $y_{i,(r-1)\lceil \frac{j}{r-1}\rceil}$.

We may also observe that no two edges in $\mathcal{E}''$ intersect. Suppose two edges intersected, but not in partite set $X_r$. Then there is another edge in $\mathcal{E}'$ also intersecting these edges at the same vertex. But we know from Proposition \ref{prop:bound} that in a test cover of size $\frac{2(n-1)}{r+1}$, at most two edges intersect at any vertex, so this cannot happen. Suppose instead, that edges $e$ and $e'$ intersect in partite set $X_r$. Then $e$ also intersects an edge in $\mathcal{E}'$ in each of the other partitions, so every vertex in $e$ is of degree $2$. But we know that in a test cover of size $\frac{2(n-1)}{r+1}$, one vertex in each edge has degree 1, so this case is also not possible. Hence, a test cover with $\frac{2(n-1)}{r+1}$ edges for $(V,\mathcal{E})$ would give a perfect matching in $G$.
\end{proof}

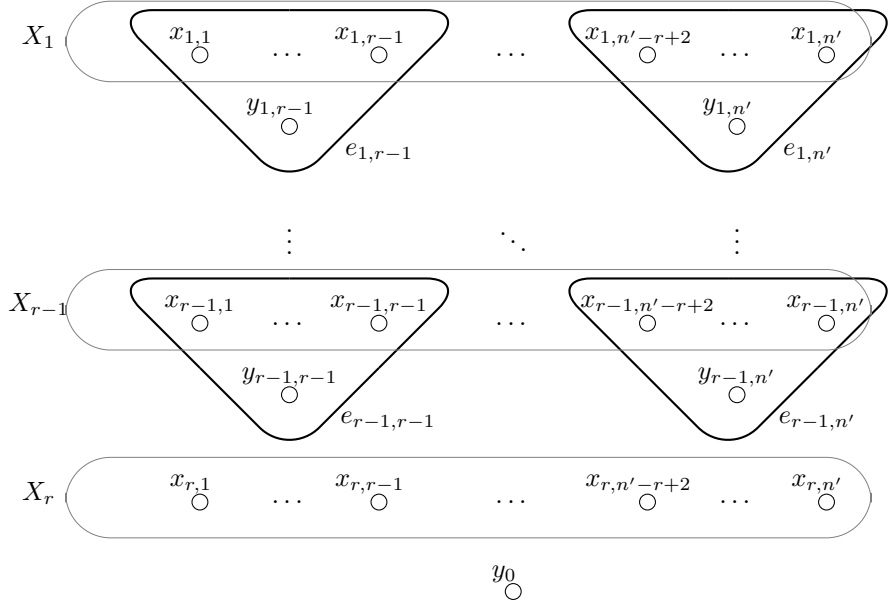
\begin{figure}[h]
\centering
\begin{tikzpicture}[scale=1.19]

\foreach \numbertwo in {0,-3,-5}
{
\foreach \numberone in {1,3.5,6}
{
\node at (\numberone,\numbertwo) {$\dots$};
}
}

\node at (1,-2) {$\vdots$}; \node at (3.5,-2) {$\ddots$}; \node at (6,-2) {$\vdots$};

\node at (-0.1,0.2) {$x_{1,1}$}; \node at (1.9,0.2) {$x_{1,r-1}$};
\node at (4.9,0.2) {$x_{1,n'-r+2}$}; \node at (6.9,0.2) {$x_{1,n'}$};

\node at (0,-2.8) {$x_{r-1,1}$}; \node at (2,-2.8) {$x_{r-1,r-1}$};
\node at (5,-2.8) {$x_{r-1,n'-r+2}$}; \node at (7,-2.8) {$x_{r-1,n'}$};

\node at (-0.1,-4.8) {$x_{r,1}$}; \node at (1.9,-4.8) {$x_{r,r-1}$};
\node at (4.9,-4.8) {$x_{r,n'-r+2}$}; \node at (6.9,-4.8) {$x_{r,n'}$};

\node at (0.9,-0.6) {$y_{1,r-1}$}; \node at (5.9,-0.6) {$y_{1,n'}$};

\node at (1,-3.6) {$y_{r-1,r-1}$}; \node at (6,-3.6) {$y_{r-1,n'}$};

\node at (3.4,-5.8) {$y_0$};

\node at (-1.8,0.2) {$X_1$}; \node at (-1.8,-2.8) {$X_{r-1}$};
\node at (-1.8,-4.9) {$X_r$};

\node at (2,-1.1) {$e_{1,r-1}$}; \node at (6.8,-1.1) {$e_{1,n'}$};
\node at (2.1,-4.1) {$e_{r-1,r-1}$}; \node at (6.9,-4.1) {$e_{r-1,n'}$};

\tikzstyle{every node}=[draw,circle,fill=white,minimum size=6pt,inner sep=0pt]

\foreach \numbertwo in {0,-3,-5}
{
\foreach \numberone in {0,2,5,7}
{
\node at (\numberone,\numbertwo) {};
}
}

\foreach \numbertwo in {-0.8,-3.8}
{
\foreach \numberone in {1,6}
{
\node at (\numberone,\numbertwo) {};
}
}

\node at (3.5,-6) {};

\foreach \numberone in {0,4.9}
{
\foreach \numbertwo in {0,-3}
{
\draw [thick,rounded corners=16pt] (\numberone,\numbertwo - 0.5) -- (\numberone-1, \numbertwo + 0.5) -- (\numberone + 1,\numbertwo + 0.5);
\draw [thick,rounded corners=16pt] (\numberone + 1,\numbertwo + 0.5) -- (\numberone + 3,\numbertwo + 0.5) -- (\numberone + 2,\numbertwo - 0.5);
\draw [thick,rounded corners=16pt] (\numberone,\numbertwo - 0.5) -- (\numberone + 1,\numbertwo - 1.5) -- (\numberone + 2,\numbertwo - 0.5);
}
}

\draw [help lines,rounded corners=17pt] (-1.5,0.6) rectangle (7.5,-0.3);
\draw [help lines,rounded corners=17pt] (-1.5,-2.4) rectangle (7.5,-3.3);
\draw [help lines,rounded corners=17pt] (-1.5,-4.5) rectangle (7.5,-5.4);

\end{tikzpicture}
\caption{Illustration of the edge set $\mathcal{E}'$ in Lemma \ref{lem:nphard}.
Edges of $\mathcal{E}'$ are in black. The sets $X_1, \dots X_r$ are in gray.
Edges of $\mathcal{E}(G)$ (not depicted) contain one vertex from each of  $X_1, \dots X_r$.}\label{fig:lem3}
\end{figure}


\begin{lemma}\label{lem:test2cover}
{\sc Test-$2$-Cover}$(2(n-1)/3+k)$ is NP-complete for $k=0$.
\end{lemma}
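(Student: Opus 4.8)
The plan is to reduce from a graph problem that is NP-complete even in the "perfect" case, mimicking the structure of the reduction in Lemma~\ref{lem:nphard} but for $r=2$. For $r=2$, edges have size at most $2$, so the hypergraph is essentially a (multi)graph, and an $r$-dimensional matching no longer makes sense. The natural source problem is \textsc{Perfect 3-Dimensional Matching} is unavailable here, so instead I would reduce from \textsc{Exact Cover by 3-Sets} or, more cleanly, from a graph-covering problem: \textsc{Partition into Triangles} (given a graph on $3q$ vertices, can its vertex set be partitioned into $q$ triangles?), which is one of Garey and Johnson's NP-complete problems. Alternatively, \textsc{3-Dimensional Matching} itself can be encoded by a gadget graph: replace each hyperedge $\{a,b,c\}$ of the $3$-partite $3$-uniform hypergraph by a small constant-size gadget of ordinary ($2$-element) edges, with ``port'' vertices identified with $a$, $b$, $c$, designed so that choosing the gadget ``active'' forces degree-$2$ behaviour on exactly the right vertices. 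I would go with whichever gadget makes the degree bookkeeping from Proposition~\ref{prop:bound} cleanest.

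The key steps, in order, are: (1)~Recall from Proposition~\ref{prop:bound} and the remark after its corollary that a test cover of size exactly $2(n-1)/3$ forces a very rigid structure when $r=2$: exactly one vertex has degree $0$, each chosen edge contains exactly one degree-$1$ vertex, and all other vertices have degree $2$. So a size-$2(n-1)/3$ test cover is (after deleting the isolated vertex) a spanning subgraph that is a disjoint union of paths where the internal vertices have degree $2$ and the endpoints degree $1$ --- in fact the separation condition forces each connected component to have a specific ``caterpillar''/path shape, and counting shows the components are essentially paths on $3$ vertices (each contributing $2$ edges and $3$ vertices, matching the ratio $2(n-1)/3$). (2)~Build the reduction so that ``there is a test cover of size exactly $2(n-1)/3$'' becomes equivalent to ``the source instance has the required exact cover / perfect partition.'' As in Lemma~\ref{lem:nphard}, I would attach forced ``pendant'' edges (each with its own private degree-$1$ vertex plus a shared $y_0$-type trick) so that certain edges must be in every size-optimal test cover, and so that the freedom that remains exactly encodes the choice of sets/triangles in the source instance. (3)~Verify both directions: a valid exact cover yields a test cover meeting the bound (check all pairs are separated: within-gadget pairs, cross-gadget pairs, and pairs involving the $y$-vertices, exactly as in Lemma~\ref{lem:nphard}); conversely a test cover meeting the bound, by the rigidity from Proposition~\ref{prop:bound}, cannot use two edges meeting at a vertex that already lies on a forced edge, which forces the chosen edges to form a valid cover. (4)~Note \textsc{Test-$2$-Cover}$(2(n-1)/3)$ is in NP trivially, giving NP-completeness; membership in para-NP and hence para-NP-completeness of \textsc{Test-$2$-Cover}$(2(n-1)/3+k)$ then follows from the standard criterion quoted in the preliminaries.

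The main obstacle will be step~(1)--(2): unlike the $r\ge 3$ case, where the ``long'' edges $e_{i,p}$ naturally carry their own degree-$1$ vertex and the matching edges slot in underneath, with $r=2$ every edge has only two vertices, so there is very little room to force structure --- each vertex's degree is capped at $2$ and each edge must donate a private degree-$1$ vertex, which is extremely restrictive. I expect the right fix is to use a gadget where the ``selection'' is encoded not by which of several size-$2$ edges covering a vertex is chosen, but by which of two perfect path-decompositions of a cycle-like gadget is used; the NP-hardness then comes from a problem about consistently orienting/choosing decompositions across shared vertices, which is where the reduction from \textsc{Partition into Triangles} (or \textsc{3-Dimensional Matching} via a triangle gadget) does the real work. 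Getting the gadget's degree sequence to match the Proposition~\ref{prop:bound} profile on the nose, with exactly one global degree-$0$ vertex, will require the same $y_0$-style pendant construction as in Lemma~\ref{lem:nphard} and careful counting of $n$.
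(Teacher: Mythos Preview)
Your structural analysis in step~(1) is exactly right and is the heart of the matter: for $r=2$, a test cover of size exactly $2(n-1)/3$ must (after removing the unique isolated vertex) be a spanning subgraph in which every edge has one endpoint of degree~$1$ and one of degree~$2$, which forces every component to be a $P_3$. You even say this explicitly. But you then fail to draw the immediate conclusion: the right source problem is not \textsc{Partition into Triangles} or \textsc{3-Dimensional Matching}, it is \textsc{$P_3$-Packing} (given a graph $G$ on $3q$ vertices, does $G$ have $q$ vertex-disjoint copies of $P_3$?), which is NP-complete \cite{GareyJohnson}.

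The paper's proof does precisely this, and the reduction is trivial: take the \textsc{$P_3$-Packing} instance $G$, add a single isolated vertex to obtain $H$, and ask whether $H$ has a test cover of size $2|V(G)|/3$. A $P_3$-packing of $G$ is directly a test cover of $H$ of the right size; conversely, your own degree analysis shows any such test cover is a $P_3$-packing of $G$. No gadgets, no pendant constructions, no $y_0$-tricks. Your proposed route through \textsc{Partition into Triangles} via per-triple gadgets might be made to work, but it is left entirely unspecified (``designed so that\ldots'' is not a construction), and it is fighting against the structure you yourself identified: the tight test covers are $P_3$-packings, not triangle packings, so any triangle-based reduction has to first simulate $P_3$'s with triangles, which is backwards.
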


\begin{proof}
 We will show NP-completeness of our problem by reduction from the $P_3$-packing problem,
which asks,
given a graph $G$ with $n$ vertices, $n$ divisible by $3$, whether $G$ has $n/3$ vertex-disjoint copies of $P_3$. The problem is NP-complete \cite{GareyJohnson}.
Consider the graph $H$ obtained from $G$ by adding to it an isolated vertex $x$. We will view $H$ as an instance of {\sc Test-$2$-Cover} and
we will prove that $G$ is a {\sc Yes}-instance of the $P_3$-packing problem if and only if $H$ has a test cover of size $2n/3$.
If $G$ contains a set of edges $F$ forming $n/3$ disjoint copies of $P_3$, then observe that $F$ forms a test cover in $H$ of size $2n/3$.
Now assume that $H$ has a test cover $F$ of size $2n/3$. We will prove that $G$ is a {\sc Yes}-instance of the $P_3$-packing problem.
Observe that every vertex of $G$ has positive degree in $G[F]$. Let $n'$ be the number of vertices of degree 1 in $G[F]$.
We know that no vertex can have degree larger than 2, as otherwise  $F$ must have more than $2n/3$ edges.
 Since $|F|=2n/3$, we have $n'+2(n-n')= 2|F|=4n/3$ and so $n'=2n/3.$ Finally, $F$ cannot have any isolated edges as it is a test cover. Thus, $F$ is a collection of $n/3$ vertex-disjoint copies of $P_3$.
\end{proof}

The next theorem follows from the straightforward fact that {\sc Test-$r$-Cover}$(\frac{2(n-1)}{r+1}+k)$ is in para-NP
and the previous two lemmas.

\begin{theorem}
{\sc Test-$r$-Cover}$(\frac{2(n-1)}{r+1}+k)$ is para-NP-complete for each fixed $r\ge 2$.
\end{theorem}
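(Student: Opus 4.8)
The plan is to appeal to the well-known criterion recalled in the introduction (following \cite{FlumGrohe06}): a problem lying in para-NP is para-NP-complete once one of its constant-parameter slices is NP-complete. Thus two ingredients are needed: (i) membership of {\sc Test-$r$-Cover}$(\frac{2(n-1)}{r+1}+k)$ in para-NP, and (ii) NP-completeness of the slice obtained by fixing $k=0$.

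For (i), I would argue directly. Given an instance $(H,k)$, a nondeterministic Turing machine guesses a subcollection $\mathcal{T}\subseteq\mathcal{E}$ with $|\mathcal{T}|\le\frac{2(n-1)}{r+1}+k$ and then checks, in time polynomial in $|H|$, that every pair of distinct vertices is separated by an edge of $\mathcal{T}$. The running time is polynomial in $|H|$, hence certainly of the form $f(k)|H|^{c}$ with $f$ constant, so the problem is in para-NP. One may additionally note that by Corollary \ref{cor:bondy} we may assume $k\le n-1$, so the threshold $\frac{2(n-1)}{r+1}+k$ is polynomially bounded in the input size and the guessed certificate is of polynomial size.

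For (ii), I would split according to $r$. If $r\ge 3$, Lemma \ref{lem:nphard} already establishes NP-hardness of the $k=0$ slice, and NP-membership of that slice is a special case of (i); hence the slice is NP-complete. If $r=2$, Lemma \ref{lem:test2cover} directly gives NP-completeness of the $k=0$ slice. In both cases the $k=0$ slice of {\sc Test-$r$-Cover}$(\frac{2(n-1)}{r+1}+k)$ is NP-complete, so the criterion applies and the problem is para-NP-complete for every fixed $r\ge 2$.

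The proof is essentially bookkeeping: the real content sits in Lemmas \ref{lem:nphard} and \ref{lem:test2cover}. The one place where a moment's care is required, rather than a mere citation, is the para-NP membership in (i)—namely verifying that a candidate test cover is a certificate of polynomial size that can be checked in polynomial time, so that the nondeterministic running time has the required shape; this is immediate here, and I do not expect any genuine obstacle.
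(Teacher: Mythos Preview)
Your proposal is correct and matches the paper's own proof essentially verbatim: the paper states that the theorem follows from the straightforward fact that {\sc Test-$r$-Cover}$(\frac{2(n-1)}{r+1}+k)$ is in para-NP together with Lemmas~\ref{lem:nphard} and~\ref{lem:test2cover}, which is exactly the two-ingredient argument you give. Your explicit verification of para-NP membership is a slight elaboration of what the paper calls ``straightforward,'' but there is no difference in approach.
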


\section{A short FPT proof for {\sc Test-$r$-Cover}$(m-k)$}\label{sec:m-kfpt}
In this section, we show that {\sc Test-$r$-Cover}$(m-k)$  is fixed-parameter tractable. The proof is short but does not lead to a polynomial kernel; this requires more work, and is the subject of the next section.

\begin{theorem}\label{thm:m-kfpt}
 There is an algorithm for {\sc Test-$r$-Cover}$(m-k)$ that runs in time $(r^2+1)^k(n+m)^{O(1)}$.
\end{theorem}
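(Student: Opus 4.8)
The plan is to design a bounded search tree algorithm on the complement of the test cover: rather than selecting edges to include, we try to identify $k$ edges that can be deleted while leaving a test cover. Start with $\mathcal{T} = \mathcal{E}$, which is a test cover by assumption, and maintain a candidate set of ``removable'' edges; the goal is to decide whether $k$ edges can be removed. The key observation is that an edge $e$ is \emph{redundant} for $\mathcal{T}$ if and only if every pair of vertices separated by $e$ is also separated by some other edge of $\mathcal{T}$; more usefully, $\mathcal{T}\setminus\{e\}$ is still a test cover iff $e$ does not split any class induced by $\mathcal{T}\setminus\{e\}$ into two, i.e. iff for the classes of $\mathcal{T}\setminus\{e\}$, the edge $e$ is a union of some of them. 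We want to repeatedly delete such edges.

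First I would show that if, at some stage, the current collection $\mathcal{T}$ has no redundant edge at all, then $\mathcal{T}$ is a \emph{minimal} test cover, and we simply compare $|\mathcal{E}| - |\mathcal{T}|$ against $k$: if at least $k$ edges have been removed so far we return \textsc{Yes}, otherwise this branch fails. Otherwise pick any redundant edge $e$ of $\mathcal{T}$. The branching step is: either (a) delete $e$ (decreasing the ``budget'' $k$ by one and recursing on $\mathcal{T}\setminus\{e\}$), or (b) commit to keeping $e$ permanently in the solution. The reason this terminates with branching factor bounded in terms of $r$ is the following: if we keep $e$, then we argue that at least one edge meeting $e$ must eventually be deleted, or else we can account for the structure locally. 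More precisely — and this is where the size bound $r$ enters — because $|e|\le r$, the edge $e$ together with all edges sharing a vertex with $e$ forms a bounded-size ``pocket'' only after we also bound degrees; instead the cleaner route is: when we keep $e$, consider the pairs of vertices inside $e$ that are currently separated \emph{only} by $e$ among the edges we have committed to keep — there are at most $\binom{r}{2} \le r^2$ such pairs, hence at most that many edges of $\mathcal{E}$ are ``forced-in competitors'' for these pairs, and we branch on which single edge among the $\le r^2$ edges separating such a pair (other than $e$) is deleted, plus the one option of keeping everything, giving branching factor $r^2 + 1$. Each branch of type ``delete'' decreases the remaining budget, so the search tree has depth $k$ and total size $(r^2+1)^k$.

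The main obstacle, and the step requiring care, is making the branching argument genuinely correct: I must argue that whenever a redundant edge $e$ exists and we do not delete it, there is a well-defined bounded set of edges at least one of which \emph{can and must} be deleted in any smaller solution that keeps $e$, so that the recursion makes progress (budget strictly decreases) on every branch while remaining exhaustive. The honest formulation is to phrase the algorithm as: find a redundant edge $e$; if keeping $e$ is consistent with some solution of the required size, then in that solution some edge $f \ne e$ that currently witnesses separation of a pair inside $e$ has been deleted, and there are at most $r^2$ candidates for $f$; branch over deleting $f$ (for each candidate) or over the case that $e$ is itself deleted. One then verifies that each of the $\le r^2 + 1$ branches strictly decreases $k$, bounding the tree by $(r^2+1)^k$, and that every internal computation — recomputing induced classes, testing redundancy, enumerating the $\le r^2$ candidates — runs in $(n+m)^{O(1)}$ time, giving the claimed $(r^2+1)^k (n+m)^{O(1)}$ bound.
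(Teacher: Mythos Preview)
Your high-level scheme --- find a redundant edge $e$, then branch on $r^2+1$ options each of which deletes one edge --- matches the paper's, but the crucial step is not sound as you have written it.

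The phrase ``pairs of vertices inside $e$ that are currently separated only by $e$'' is incoherent: an edge $e$ never separates a pair with both endpoints in $e$, so there are no such pairs; and if $e$ is redundant then by definition no pair at all is separated only by $e$. If instead you meant pairs $(x,y)$ with $x\in e$, $y\notin e$, there are $|e|\cdot(n-|e|)$ of them, not $\binom{r}{2}$. Either way, you have not produced a set of $\le r^2$ edges with the property that any minimum test cover omits at least one of them together with $e$; without that, your ``keep $e$'' branches do not cover all solutions, and you have not shown that every branch strictly decreases the budget.

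What the paper does is different and is what makes the argument go through. It first arranges (by guessing, at a multiplicative cost of $n+1$) that there is an isolated vertex $y_0$. Given a redundant $e$, it builds a concrete set $\mathcal{B}$ of at most $r^2$ edges in $\mathcal{E}\setminus\{e\}$ that \emph{isolates} the vertex set of $e$: first a minimal cover $\mathcal{B}_0$ of $e$ by $\le r$ other edges (this exists because $y_0$ must be separated from each vertex of $e$), and then for each $b\in\mathcal{B}_0$ at most $r-1$ edges separating $b\cap e$ from $b\setminus e$ (Corollary~\ref{cor:sep}). Since $\mathcal{B}$ alone already isolates $e$, any minimum test cover must omit at least one edge of $\mathcal{B}\cup\{e\}$, and one branches on which. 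Your proposal needs an analogous construction and correctness argument; the ``pairs inside $e$'' heuristic does not supply one.
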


\begin{proof}
We first need to guess whether there will be a vertex not contained in any edge in the solution, and if so, which vertex it will be. If there already exists a vertex $y_0$ not in any edge in $\mathcal{E}$, then we are done. Otherwise, either pick a vertex $x$ or guess that every vertex in $V$ will be covered by the solution. If a vertex $x$ is picked, delete all the edges containing $x$, and reduce $k$ by the number of deleted edges. If it is guessed that every vertex in $V$ will be covered by the solution, add a new vertex $y_0$ which is not in any edge. Observe that this does not change the solution to the problem.
By doing this we have split the problem into $n+1$ separate instances, with each instance containing an isolated vertex.
Thus we may now assume that there exists a vertex $y_0$ which is not contained in any edge in $\mathcal{E}$.

Consider an edge $e \in \mathcal{E}$, and suppose that $\mathcal{E}\setminus \{e\}$ is a test cover.
Let $\mathcal{B}_0$ be a minimal set of edges in $\mathcal{E}\setminus \{e\}$ which covers $e$. Note that such a set must exist, as otherwise $x$ is not separated from $y_0$ in $\mathcal{E}\setminus \{e\}$  for some $x \in e$, and so $\mathcal{E}\setminus \{e\}$ is not a test cover. Furthermore, we may assume $|\mathcal{B}_0|\le |e| \le r$.
Now for each $b \in \mathcal{B}_0$, let $\mathcal{B}_b$ be a minimal set of edges in $\mathcal{E}\setminus \{e\}$ separating every vertex in $b \setminus e$ from every vertex in $b \cap e$. 
By Corollary \ref{cor:sep},
we may assume that $|\mathcal{B}_b|\le r-1$.

Now let $\mathcal{B}=\mathcal{B}_0 \cup (\bigcup_{b \in \mathcal{B}_0} \mathcal{B}_b)$, and observe that $\mathcal{B}$  isolates the vertex set of $e$. Thus, in any solution with minimum number of edges, at least one edge from $\mathcal{B} \cup \{e\}$ will be missing. Note that $|\mathcal{B}| \le r + r(r-1) = r^2$.

We now describe a depth-bounded search tree algorithm for {\sc Test-$r$-Cover}$(m-k)$.
If $\mathcal{E}$ is not a test cover, return {\sc No}. Otherwise if $k=0$ return {\sc Yes}.
Otherwise, for each edge $e \in \mathcal{E}$ check whether $\mathcal{E} \setminus \{e\}$ is a test cover.
If for all $e \in \mathcal{E}$, $\mathcal{E} \setminus \{e\}$ is not a test cover,
 then a test cover must contain all $m$ edges and so we return {\sc No}.
Otherwise, let $e$ be an edge such that $\mathcal{E} \setminus \{e\}$ is a test cover, and construct the set $\mathcal{B}$ as described above. Then we may assume one of $\mathcal{B} \cup \{e\}$  is not in the solution. Thus we may pick one edge from $\mathcal{B} \cup \{e\}$, delete it, and reduce $k$ by $1$. So we split into $r^2+1$ instances with reduced parameter.

We therefore have a search tree with at most $(r^2+1)^k$ leaves.  As every internal node has at least $2$ children, the total number of nodes is at most $2(r^2+1)^k-1$. Note also that guessing the isolated vertex at the start split the problem into $n+1$ instances, so there are at most $(n+1)(2(r^2+1)^k-1)$ nodes to compute in total. As each node in the tree takes polynomial time to compute, we have an algorithm with total running time  $(r^2+1)^k(n+m)^{O(1)}$.
\end{proof}

\section{Polynomial Kernel for {\sc Test-$r$-Cover}$(m-k)$}\label{sec:m-k}
In this section, we show that {\sc Test-$r$-Cover}$(m-k)$ admits a polynomial kernel.
It will be useful to consider a slight generalization of {\sc Test-$r$-Cover}$(m-k)$, which we call {\sc Subset-Test-$r$-Cover}$(m-k)$. In this problem, we are given a special subset $\mathcal{B} \subseteq \mathcal{E}$ of edges which are required to be in the solution. For convenience we will say these edges are \emph{colored black}. 

%

We begin with the following reduction rules, which must be applied whenever possible:


\begin{krule}\label{rule:blackedge1}
Given a vertex $x$ of degree 1 and a black edge $b$
which contains only $x$, delete $b$ from $\cal B$ and $\cal E$, delete $x$
and leave $k$ the same.
\end{krule}

\begin{krule} \label{rule:blackedge2}
Given a black edge $b$, if there exists any other edge $e$ such that $b\subset e$,
then replace $e$ with $e\setminus b$. If there exists a black edge $b'$ such that
$b$ cuts $b'$ and $b'$ cuts $b$,
delete $b$ and $b'$ and add black edges $b\setminus b'$, $b'\setminus b$ and $b\cap b'$. Leave $k$ the same.
\end{krule}


\begin{lemma}\label{lem:blackedgerules}
 Let $(V,\mathcal{E}', \mathcal{B}',k)$ be an instance of  {\sc Subset-Test-$r$-Cover}$(m-k)$ derived from  $(V,\mathcal{E}, \mathcal{B}, k)$ by an application of Rule \ref{rule:blackedge1} or \ref{rule:blackedge2}. Then $(V,\mathcal{E}', \mathcal{B}', k)$ is a {\sc Yes}-instance if and only if $(V,\mathcal{E}, \mathcal{B}, k)$ is a {\sc Yes}-instance.
\end{lemma}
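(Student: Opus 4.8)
The plan is to treat the two reduction rules separately, and for each to establish a correspondence between test covers of the original instance and test covers of the reduced instance that preserves both the property of being a test cover and the "budget": a test cover $\mathcal{T}$ of $(V,\mathcal{E},\mathcal{B},k)$ has size at most $m-k$ (where $m=|\mathcal{E}|$) and contains all of $\mathcal{B}$, and we want the image to have size at most $m'-k$ where $m'=|\mathcal{E}'|$. Since in both rules $k$ is left unchanged, the key bookkeeping is that the number of edges \emph{removed} from any test cover equals the change $m-m'$; equivalently, the reduction never forces us to spend more of the budget than it frees up.

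First I would handle Rule \ref{rule:blackedge1}. Here a degree-$1$ vertex $x$ sits in a black edge $b=\{x\}$. The edge $b$ is the unique edge separating $x$ from any other vertex, and conversely $b$ separates \emph{every} other vertex from $x$ and separates no other pair; so $b$ must lie in every test cover and is irrelevant to separating pairs inside $V\setminus\{x\}$. Thus $\mathcal{T}\mapsto\mathcal{T}\setminus\{b\}$ is a bijection between test covers of $(V,\mathcal{E})$ containing $\mathcal{B}$ and test covers of $(V\setminus\{x\},\mathcal{E}\setminus\{b\})$ containing $\mathcal{B}\setminus\{b\}$; since $m$ and the solution size both drop by exactly one while $k$ is unchanged, the budget is preserved, giving the equivalence. (One minor point to check: deleting $x$ does not create a new vertex needing separation, and the remaining edges, restricted to $V\setminus\{x\}$, are unchanged because $x$ was in no edge other than $b$.)

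Next, Rule \ref{rule:blackedge2}, which has two sub-cases and is where I expect the real work to be. In the first sub-case, a black edge $b$ is properly contained in some edge $e$, and $e$ is replaced by $e\setminus b$. The idea is that since $b$ is forced into every solution, a pair $\{u,v\}$ is separated by $e$ but not by $b$ exactly when it is separated by $e\setminus b$: indeed $|\{u,v\}\cap e|=1$ and $|\{u,v\}\cap b|\in\{0,2\}$ together are equivalent to $|\{u,v\}\cap(e\setminus b)|=1$ — one should verify the small case analysis (if both $u,v\in b$ then neither $e$, $b$ nor $e\setminus b$ separates them; if exactly one is in $b$, then $e$ separates them iff the other is outside $e$ iff... — actually one must be slightly careful here and this is the crux). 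The cleanest formulation: for any test cover, replacing $e$ by $e\setminus b$ (or vice versa) keeps it a test cover, because $b$ is already present and $\{$separated by $b\} \cup \{$separated by $e\} = \{$separated by $b\}\cup\{$separated by $e\setminus b\}$. Here $m'=m$, so the budget is trivially preserved, and the map $\mathcal{T}\mapsto(\mathcal{T}\setminus\{e\})\cup\{e\setminus b\}$ if $e\in\mathcal{T}$, identity otherwise, is the required bijection. In the second sub-case, two black edges $b,b'$ cross (each cuts the other), and they are replaced by the three black edges $b\setminus b'$, $b'\setminus b$, $b\cap b'$; all three are forced, and $m' = m+1$. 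The argument is again that the separating power of $\{b,b'\}$ equals the separating power of $\{b\setminus b', b'\setminus b, b\cap b'\}$ (a pair is separated by $b$ or $b'$ iff the two vertices lie in different ones of the four regions $b\cap b'$, $b\setminus b'$, $b'\setminus b$, $V\setminus(b\cup b')$, which is exactly the condition to be separated by one of the three new edges — here I must double-check that the new triple does \emph{not} separate any pair that the old pair fails to separate, so that no superfluous separation is introduced, though for a test cover that direction is harmless; what matters is the reverse inclusion). Then $\mathcal{T}\mapsto(\mathcal{T}\setminus\{b,b'\})\cup\{b\setminus b', b'\setminus b, b\cap b'\}$ maps a test cover of size $s$ to one of size $s+1$, and since $m'=m+1$ and $k$ is unchanged, "$s\le m-k$" becomes "$s+1\le m'-k$", so the budget is preserved; the inverse map establishes the converse direction.

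The main obstacle is the sub-case analysis in Rule \ref{rule:blackedge2}: one has to show not merely that the rewritten edge set is still a test cover, but that the rewriting is a faithful translation in both directions and that edge-count changes exactly match, so the parameter $k$ need not be adjusted. Both points reduce to the purely set-theoretic identity that, given that certain black edges are in every solution, replacing a nested or crossing configuration of black edges by the "atoms" of the induced partition of $V$ leaves invariant the set of pairs that the \emph{whole} collection separates — which is just the statement that the partition of $V$ into classes induced by the black edges is unchanged — while adjusting $m$ by the (zero or one) change in the number of black edges. Once that identity is nailed down, the equivalence of the two instances follows immediately in each case.
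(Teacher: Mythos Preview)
Your proposal is correct and follows essentially the same approach as the paper: handle each rule separately, establish that the forced black edges make the old and new edge collections have identical separating power, and check that the change in $|\mathcal{E}|$ matches the change in solution size so that $k$ need not move. The paper's proof is terser on the budget bookkeeping (it simply asserts one shows a solution of size $|\mathcal{E}|-k$ exists iff one of size $|\mathcal{E}'|-k$ exists), whereas you spell out the $m'=m$, $m'=m+1$ accounting explicitly; one small inaccuracy is your claim that in Rule~\ref{rule:blackedge1} the edge $b$ is \emph{the unique} edge separating $x$ from every other vertex (an edge $e\neq b$ with $y\in e$ also separates $x$ from $y$), but this is harmless since $b\in\mathcal{B}$ is forced regardless.
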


\begin{proof}
We will show for each rule that for any $t$, $(V,\mathcal{E}, \mathcal{B}, k)$ has a solution of size $|\mathcal{E}|-k$ if and only if     $(V,\mathcal{E}', \mathcal{B}', k)$ has a solution of size $|\mathcal{E}'|-k$.

 {\bf Rule \ref{rule:blackedge1}: } Suppose $(V,\mathcal{E}', \mathcal{B}', k)$ is a {\sc Yes}-instance, with solution ${\cal T}'$. Then observe that ${\cal T} = {\cal T}' \cup \{b\}$ is a solution for $(V,\mathcal{E}, \mathcal{B}, k)$.
Conversely, if ${\cal T}$ is a solution for $(V,\mathcal{E}, \mathcal{B}, k)$ then ${\cal T}$ must contain $b$, and ${\cal T}\backslash \{b\}$ is a solution for $(V,\mathcal{E}', \mathcal{B}', k)$.

{\bf Rule \ref{rule:blackedge2}: } First consider the case when $b \subset e$ for some other edge $e$. It is sufficient to show that for any ${\cal T} \subseteq \mathcal{E}$ containing $e$ and $b$, ${\cal T}$ is a test cover if and only if $({\cal T} \setminus \{e\}) \cup \{e \setminus b\}$ is a test cover. To see this, observe that for any $x \in e, y \notin e$, $x$ and $y$ are separated either by $b$ or $e \setminus b$, and for any $x \in e \setminus b$, $y \notin  e \setminus b$, $x$ and $y$ are separated either by $b$ or $e$.

Now consider the case when $b, b'$ are intersecting black edges.
Similar to the previous case, if $x$ and $y$ are separated by one of $b,b'$ then they are also separated by at least one of $b\setminus b'$, $b'\setminus b$, $b\cap b'$, and if they are separated by one of $b\setminus b'$, $b'\setminus b$, $b\cap b'$ then they are also separated by at least one of $b,b'$.
\end{proof}

\begin{lemma}\label{lem:degree}
Let $(V,\mathcal{E}, \mathcal{B},k)$ be an instance irreducible by  Rules \ref{rule:blackedge1} and \ref{rule:blackedge2}.  The instance can be reduced, in polynomial time, to an equivalent instance such that every vertex has degree at most $kr^2$.
\end{lemma}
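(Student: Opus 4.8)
The plan is to show that any vertex of very high degree forces many edges into every solution, so we can safely color enough of its incident edges black and then reduce via Rules \ref{rule:blackedge1}--\ref{rule:blackedge2}. Fix a vertex $x$ with $\deg(x) > kr^2$ (if no such vertex exists we are done). Since the solution $\mathcal{T}$ we seek has size $|\mathcal{E}| - k$, it omits at most $k$ edges in total, so it contains all but at most $k$ of the edges through $x$. The key claim is that we may in fact assume the solution contains \emph{every} edge through $x$: if some solution $\mathcal{T}$ omits an edge $e \ni x$, then because $\mathcal{T}$ is still a test cover and has spare edges, we can exploit the argument from Theorem \ref{thm:m-kfpt} to reroute. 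Concretely, for each edge $e \ni x$ that is omitted, build (as in the proof of Theorem \ref{thm:m-kfpt}) a set $\mathcal{B}_e \subseteq \mathcal{E}$ with $|\mathcal{B}_e| \le r^2$ such that $\mathcal{B}_e \cup \{e\}$ isolates the vertex set of $e$; since at most $k$ edges are omitted, at most $kr^2$ edges lie in $\bigcup_e \mathcal{B}_e$, but $x$ has more than $kr^2$ incident edges, so there is an edge $e' \ni x$ with $e' \notin \bigcup_e \mathcal{B}_e \cup \{\text{omitted edges}\}$; then one shows $(\mathcal{T} \setminus \{e'\}) \cup \{e\}$ is again a test cover, giving a solution with the same number of edges but one fewer omitted edge through $x$. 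Iterating, we obtain a solution containing all edges through $x$.

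Having established that, we color black every edge incident to $x$. This does not change the answer, by the claim, and it does not change the status with respect to {\sc Subset-Test-$r$-Cover}$(m-k)$. Now Rule \ref{rule:blackedge2} applies: whenever two black edges $b, b'$ properly cross, they are replaced by the (smaller, disjoint-or-nested) pieces $b \setminus b'$, $b' \setminus b$, $b \cap b'$, and whenever a black edge is contained in another edge, that other edge is shrunk. Repeatedly applying Rule \ref{rule:blackedge2} to the black edges through $x$ makes them pairwise laminar, and since they all contain $x$ they become nested; the innermost black edge around $x$ then contains only $x$ (after further applications removing supersets), at which point Rule \ref{rule:blackedge1} deletes $x$ together with that edge. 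After deleting $x$, repeat the whole procedure for the next high-degree vertex. Each application of Rule \ref{rule:blackedge2} strictly decreases a suitable potential (e.g. $\sum_{b \in \mathcal{B}} |b|^2$), so the process terminates in polynomial time, and we are left with an equivalent instance in which every vertex has degree at most $kr^2$.

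The main obstacle is the rerouting step: verifying that swapping $e'$ out for $e$ preserves the test-cover property. The point is that $\mathcal{B}_e \cup \{e\}$ isolates $V(e)$ using only edges that are kept, so every pair involving a vertex of $e$ is still separated after adding $e$; and since $e'$ was not needed to separate any pair once $e$ is present — one must check this, using that $e' \notin \mathcal{B}_{e''}$ for any omitted $e''$ and that the remaining edges plus $e$ already form a test cover on the relevant vertices — dropping $e'$ is harmless. Care is needed because a single swap might re-break separations that an earlier swap established, so the cleanest route is to perform all swaps simultaneously: choose, for the set $S$ of omitted edges through $x$, an injection into the edges through $x$ avoiding $\bigcup_{e \in S} \mathcal{B}_e \cup S$ (possible since $\deg(x) > kr^2 \ge |S| \cdot r^2 + |S|$), swap the whole matching at once, and argue the result is a test cover all of whose omitted edges avoid $x$. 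I would also double-check the bookkeeping that coloring an edge black and then eliminating it via the two rules genuinely leaves $|\mathcal{E}| - k$ unchanged in the way the equivalence in Lemma \ref{lem:blackedgerules} requires.
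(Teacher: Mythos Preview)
Your central claim---that when $\deg(x) > kr^2$ some optimal solution contains \emph{every} edge through $x$---is false, and with it the rerouting-and-coloring plan collapses. Take $r=2$, $k=2$, vertex set $\{x,y_1,\ldots,y_9\}$, and edge set consisting of the nine edges $\{x,y_i\}$ together with one extra edge $\{y_1,y_2\}$. Then $\deg(x)=9>kr^2=8$ and $m=10$. The set $\mathcal{E}\setminus\{\{x,y_1\},\{x,y_3\}\}$ is a test cover of size $m-k=8$ (here $y_1$ is still covered by $\{y_1,y_2\}$ and $y_3$ becomes the unique uncovered vertex), so the instance is a {\sc Yes}-instance. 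But any edge set containing all nine $x$-edges already has size at least $9>m-k$, so no solution of the required size can contain them all; coloring every $x$-edge black turns this {\sc Yes}-instance into a {\sc No}-instance. The exact place your swap argument breaks is the one you flagged as the ``main obstacle'': removing $e'\ni x$ may destroy the separation of $x$ from some $v\in e\setminus e'$, and adding $e$ cannot repair it since both $x$ and $v$ lie in $e$. (There is also a secondary issue: the $\mathcal{B}_e$ construction from Theorem~\ref{thm:m-kfpt} assumed a pre-existing isolated vertex, which you do not have here.)

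The paper proceeds quite differently. Instead of forcing the whole star at $x$ into the solution, it iteratively builds $k+1$ pairwise edge-disjoint ``isolating gadgets'' $E_i\cup\{e_i\}$ for a set $X\ni x$; whenever this fails, $X$ is enlarged to the class it induces in the remaining edges. One shows $|X|\le r$ throughout (otherwise no remaining edge contains $x$, contradicting $\deg(x)>kr^2$), so the process terminates with a set $\widetilde{X}$ that is isolated by $\mathcal{E}\setminus D$ for \emph{every} $D$ with $|D|\le k$. Then $\widetilde{X}$ is added as a \emph{single} black edge (or deleted with $k\leftarrow k-1$ if it is already present), and one application of Rules~\ref{rule:blackedge1}--\ref{rule:blackedge2} strips $\widetilde{X}$ out of every edge through $x$, strictly reducing $\deg(x)$. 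The point your plan misses is that the robust object is this set $\widetilde{X}$, not the entire star at $x$.
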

\begin{proof}
Assume that there exists a vertex $x$ with degree in $\mathcal{E}$ greater than $kr^2$.
We will be 
able to produce an equivalent instance in which either $k$ or the degree of $x$ is 
reduced.
Clearly this reduction can only take place a polynomial number of times, so in polynomial time we will reduce to an instance in which every vertex has degree bounded by $kr^2$.

We produce a special set $\widetilde{X}$, such that $\widetilde{X}$ is still isolated when at most $k$ edges are deleted, according to the following algorithm.

\RestyleAlgo{algoruled}
\begin{algorithm}[H]
Set $\widetilde{\mathcal{E}} = \mathcal{E}$, $i=1$, $X=\{x\}$, $j=1$\;
 \While{$i \le k+1$}{
  \eIf{$\widetilde{\mathcal{E}}$ isolates $X$}
  {
  Let $e_i$ be an edge containing $X$, and construct a set $E_i\subseteq\widetilde{\mathcal{E}}$ such that $E_i\cup \{e_i\}$
isolates $X$ and $|E_i| \le r-1$\;
Set $\widetilde{\mathcal{E}} = \widetilde{\mathcal{E}} \setminus (E_i\cup \{e_i\})$ \;
  Set $i=i+1$\;    
    }{
   Let $X'$ be the class induced by $\widetilde{\mathcal{E}}$ containing $X$\;
   Set $X = X', i = 1, j=j+1$;
   }
 }
Set $\widetilde{X} = X$.
\end{algorithm}

Observe that throughout the algorithm, by construction any edge in  $\widetilde{\mathcal{E}}$ which contains $x$ also contains $X$ as a subset, $|X|\ge j$ and $\widetilde{\mathcal{E}}\subseteq \mathcal{E}$. Notice that if the algorithm ever sets $j=r+1$, then at that point at most $kr^2$ edges have been deleted from  $\widetilde{\mathcal{E}}$, and as $|X|\ge r+1$, no remaining edges in $\widetilde{\mathcal{E}}$ contain $x$. But this is a contradiction as the degree of $x$ is greater than $kr^2$. Therefore we may assume the algorithm never reaches $j=r+1$. Hence the algorithm must terminate for some $j\le r$.

We now show that we can always find $e_i$ and $E_i$ for $i\le k+1$.
Since $x$ has degree greater than $kr^2$ and at most $kr(r-1)$ edges are removed from $\widetilde{\mathcal{E}}$ earlier in the algorithm, we can always find an edge $e_i$ containing $x$ and therefore containing $X$. To see that $E_i$ can be constructed using at most $r-1$ edges, apply Corollary \ref{cor:sep} to $X$ and $e_i \setminus X$. 

Now consider the set $\widetilde{X}$ formed by the algorithm. When the algorithm terminated, $e_i$ and $E_i$ were found for all $i\le k+1$. If we remove $k$ arbitrary edges from $\mathcal{E}$,
it is still possible to find $i$ such that no edges in $E_i\cup \{e_i\}$ have been deleted. This means that as long as we delete at most $k$
edges, $\widetilde{X}$ is still isolated.
Therefore if $\widetilde{X}$ is an edge in $\mathcal{E}$, delete it and reduce $k$ by $1$.
If $\widetilde{X}$ is not an edge in $\mathcal{E}$, add a new black edge $\widetilde{X}$ to $\mathcal{E}$ and $\mathcal{B}$, keeping $k$ the same, and apply Rule \ref{rule:blackedge1} and \ref{rule:blackedge2}.
Observe that since $\widetilde{X}$ is properly contained in at least two edges, before adding the black edges, this will decrease the degree of every vertex in $\widetilde{X}$.
\end{proof}

Now assume that $(V,\mathcal{E}, \mathcal{B},k)$ is reduced by Rules \ref{rule:blackedge1} and \ref{rule:blackedge2} and that every vertex has degree at most $kr^2$.
We will color the uncolored edges in $\mathcal{E}$ as follows.
%
For every edge $e$ which is not black, if $\mathcal{E}\setminus \{e\}$ is not a test cover, color $e$ black, adding it to $\mathcal{B}$
(and apply Rules \ref{rule:blackedge1} and \ref{rule:blackedge2}). If $\mathcal{E}\setminus \{e\}$ is a test cover and $e$ contains a degree one vertex, color $e$ orange.
Otherwise, color $e$ green.

\begin{rem}\label{isol}
Notice that an edge is colored orange only if 
 there is no isolated vertex.
\end{rem}

\begin{lemma}\label{green}
If $G$ is a set of green edges such that, for every pair $g_1,g_2\in G$, $N_1[g_1]\cap N_1[g_2]$ is empty, then $\mathcal{E}\setminus G$
is a test cover.
\end{lemma}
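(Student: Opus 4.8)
The plan is to argue by contradiction. Suppose $\mathcal{E}\setminus G$ is not a test cover; then there is a pair of distinct vertices $x,y$ separated by no edge of $\mathcal{E}\setminus G$. Since $\mathcal{E}$ itself is a test cover, every edge that separates $x$ and $y$ must lie in $G$.

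First I would show that at least two edges of $G$ separate $x$ and $y$. Pick any edge $g\in G$ separating them; since $g$ is green, $\mathcal{E}\setminus\{g\}$ is still a test cover, so some edge $g'\neq g$ also separates $x$ and $y$, and $g'\in G$ as well. Next, each edge separating $x,y$ contains exactly one of $x,y$. If two of them both contained $x$ (resp.\ both contained $y$), they would intersect at $x$ (resp.\ at $y$), so that common vertex—equivalently, one edge lying in the $N_1$-neighborhood of the other—would witness $N_1[g_1]\cap N_1[g_2]\neq\emptyset$, contradicting the hypothesis. Hence there are exactly two such edges, $g_1\ni x$ and $g_2\ni y$, with $x\notin g_2$ and $y\notin g_1$.

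The heart of the argument is to exhibit an element of $N_1[g_1]\cap N_1[g_2]$ from this configuration. Since $g_2$ is green, it contains no degree-one vertex, so $y$ has degree at least two and therefore lies in some edge $f\neq g_2$. If $x\notin f$, then $f$ separates $x$ and $y$, forcing $f\in\{g_1,g_2\}$; but $f\neq g_2$, and $y\in f\setminus g_1$ rules out $f=g_1$, a contradiction. Thus $\{x,y\}\subseteq f$. Now $f$ meets $g_1$ at $x$ and $g_2$ at $y$, and $f$ is distinct from both $g_1$ and $g_2$; hence $f\in N_1[g_1]\cap N_1[g_2]$ (equivalently $y\in N_1[g_1]\cap N_1[g_2]$), contradicting the assumed disjointness. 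This contradiction shows $\mathcal{E}\setminus G$ is a test cover.

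I expect the main obstacle to be the final step: recognizing that the defining property of a green edge—containing no degree-one vertex—is precisely what forces the existence of the extra edge $f$ through $y$, which (being distinct from $g_2$) must also pass through $x$ and thereby links $g_1$ and $g_2$ inside the forbidden intersection. The reduction to "exactly two separating edges, one through $x$ and one through $y$" is routine bookkeeping; everything hinges on extracting that edge $f$, so I would make sure the degree bound and the case $x\notin f$ are handled carefully there.
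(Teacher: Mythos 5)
Your proof is correct and uses essentially the same ingredients as the paper's: greenness of one edge yields a second separating edge, which must be a green edge of $G$ through the other vertex; greenness of that edge gives the extra edge $f$ through $y$; and the disjointness of the closed neighborhoods forbids $f$ from joining the two green edges. The paper merely wraps this in an induction on $|G|$ rather than your direct contradiction, so the arguments are the same in substance.
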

\begin{proof}
We proceed by induction on $|G|$. If $|G|=1$ this is obviously true. If $|G|=j+1$, delete the first $j$ edges and consider the last one, denoted $g$. The only problem
that could occur removing $g$ is that a vertex $x\in g$ may no longer be separated from another vertex $y$. If $y$ is not in one of
the edges in $G \setminus \{g\}$,
then $x$ and $y$ are not separated even by $\mathcal{E}\setminus \{g\}$,
which is a contradiction since $g$ is green.
Therefore, denote by $g'$ the edge in $G$ which contains $y$. The degree of $y$ is at least $2$, hence there exists an edge different from $g'$ which contains
$y$; this edge cannot contain $x$ too, or $N_1[g]\cap N_1[g']$ would not be empty. This ensures that 
$x$ and $y$ are separated by $\mathcal{E}\setminus G$.
\end{proof}

\begin{krule}\label{rule:orangeedge}
 Given an orange edge $o$, if $N_2[o]$ contains no green edges, delete $o$ and decrease $k$ by $1$.
(Notice that this creates an isolated vertex, which means that every other orange edge will become black.)
\end{krule}


\begin{lemma}\label{orange}
 Let $(V,\mathcal{E}', \mathcal{B}',k-1)$ be an instance of  {\sc Subset-Test-$r$-Cover}$(m-k)$ derived from  $(V,\mathcal{E}, \mathcal{B}, k)$ by an application of Rule \ref{rule:orangeedge}. Then $(V,\mathcal{E}', \mathcal{B}', k-1)$ is a {\sc Yes}-instance if and only if $(V,\mathcal{E}, \mathcal{B}, k)$ is a {\sc Yes}-instance.
\end{lemma}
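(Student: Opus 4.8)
The plan is to prove the two implications separately. The forward implication is immediate, and all the work is in the converse. Throughout, recall that after the rule is applied we have $\mathcal{E}'=\mathcal{E}\setminus\{o\}$ and $\mathcal{B}'=\mathcal{B}\cup(\{\text{orange edges}\}\setminus\{o\})$ (by Remark~\ref{isol}, once $o$ is deleted an isolated vertex appears, so every other orange edge becomes black), and that $|\mathcal{E}'|-(k-1)=|\mathcal{E}|-k$.

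For the ``only if'' direction: if $\mathcal{T}'$ is a test cover with $|\mathcal{T}'|\le|\mathcal{E}'|-(k-1)$ and $\mathcal{B}'\subseteq\mathcal{T}'$, then since $\mathcal{E}'\subseteq\mathcal{E}$, $\mathcal{B}\subseteq\mathcal{B}'$, and $|\mathcal{E}'|-(k-1)=|\mathcal{E}|-k$, the same set $\mathcal{T}'$ witnesses that $(V,\mathcal{E},\mathcal{B},k)$ is a {\sc Yes}-instance. For the ``if'' direction, let $\mathcal{T}$ be a test cover with $|\mathcal{T}|\le|\mathcal{E}|-k$ and $\mathcal{B}\subseteq\mathcal{T}$, and let $x$ be the degree-$1$ vertex of $o$; note $x$ is the \emph{unique} such vertex, as a second one would be a second isolated vertex of the test cover $\mathcal{E}\setminus\{o\}$. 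First suppose $o\notin\mathcal{T}$. Then $x$ lies in no edge of $\mathcal{T}$ and is isolated by $\mathcal{T}$; hence for any orange edge $o'\neq o$ with its own degree-$1$ vertex $x'$, separating $x$ from $x'$ forces $x'$ into some edge of $\mathcal{T}$, which can only be $o'$. So $\mathcal{T}$ contains every orange edge other than $o$, giving $\mathcal{B}'\subseteq\mathcal{T}\subseteq\mathcal{E}'$; together with $|\mathcal{T}|\le|\mathcal{E}|-k=|\mathcal{E}'|-(k-1)$, the set $\mathcal{T}$ already witnesses that the reduced instance is a {\sc Yes}-instance.

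The main case is $o\in\mathcal{T}$. Put $\mathcal{R}=\mathcal{E}\setminus\mathcal{T}$, so $|\mathcal{R}|\ge k$; since $\mathcal{B}\subseteq\mathcal{T}$ no edge of $\mathcal{R}$ is black, and $\mathcal{R}$ contains at most one orange edge, since two orange edges in $\mathcal{R}$ would leave their two distinct degree-$1$ vertices both isolated by $\mathcal{T}$. Let $\mathcal{R}_g$ be the green edges of $\mathcal{R}$, so $|\mathcal{R}_g|\ge k-1$, and set $\mathcal{T}'=\mathcal{E}\setminus(\{o\}\cup\mathcal{R}_g)$. Then $\mathcal{T}'$ omits only $o$ and some green edges, so it contains $\mathcal{B}$ and all orange edges except $o$, giving $\mathcal{B}'\subseteq\mathcal{T}'\subseteq\mathcal{E}'$, and $|\mathcal{T}'|\le|\mathcal{E}|-1-(k-1)=|\mathcal{E}'|-(k-1)$. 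It remains to check that $\mathcal{T}'$ (after one possible correction) is a test cover. Since $\mathcal{T}'\supseteq\mathcal{T}\setminus\{o\}$ and $\mathcal{T}$ is a test cover, the only pairs $\mathcal{T}'$ might fail to separate are those that $\mathcal{T}$ separates by $o$ alone; fix such a pair $\{u,v\}$ with $u\in o$, $v\notin o$. Because $o$ is orange, $\mathcal{E}\setminus\{o\}$ is a test cover, so some edge $f\neq o$ separates $\{u,v\}$. Now invoke the hypothesis that $N_2[o]$ contains no green edge: every edge meeting $o$, and every edge meeting an edge that meets $o$, is black or orange, hence survives in $\mathcal{T}'$. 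If $u\in f$, then $f$ meets $o$, so $f$ is non-green and $f\in\mathcal{T}'$. Otherwise $v\in f$, $u\notin f$; if $u$ has degree $\ge 2$ it lies in some $e_u\neq o$, which is non-green and in $\mathcal{T}'$, and either $e_u$ separates $\{u,v\}$ or $v\in e_u$, in which case $f$ meets $e_u$ (a neighbour of $o$) and so $f\in N_2[o]\cap\mathcal{T}'$. If instead $u$ has degree $1$, then $u$ is isolated by $\mathcal{E}\setminus\{o\}$, so $u=x$; by $o$-essentiality $v$ is in no edge of $\mathcal{T}$, hence $v$ is the unique isolated vertex of $\mathcal{T}$. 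If $v$ has degree $1$ its unique edge is orange and lies in $\mathcal{T}'$ and separates $\{x,v\}$; if $v$ has degree $\ge 2$ and one of its edges is orange, that edge lies in $\mathcal{T}'$ and separates $\{x,v\}$; the sole remaining configuration is that all (at least two) edges of $v$ are green, hence all removed, leaving both $x$ and $v$ isolated in $\mathcal{T}'$. In this last configuration $\mathcal{R}$ contains no orange edge (its degree-$1$ vertex would be a second isolated vertex of $\mathcal{T}$), so $|\mathcal{R}_g|=|\mathcal{R}|\ge k$, and the correction is to put one green edge $g\ni v$ back: $\mathcal{T}'\cup\{g\}$ still satisfies all size and containment constraints (its size is $\le|\mathcal{E}|-|\mathcal{R}_g|\le|\mathcal{E}|-k$), separates $\{x,v\}$ via $g$, and hence is a test cover.

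The main obstacle is exactly this last case analysis: after deleting $o$ one must be certain that every pair previously separated by $o$ alone regains a separator, and the hypothesis ``$N_2[o]$ contains no green edge'' is precisely what forces the needed substitute edge (an edge adjacent to an edge adjacent to $o$) to be black or orange, and hence to survive; one then has to notice that a single pathological pair can still remain, and that in exactly that configuration the reduced budget happens to leave room to keep one green edge.
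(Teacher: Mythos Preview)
Your argument is correct, and the underlying idea coincides with the paper's: once $o$ is removed, the hypothesis that $N_2[o]$ contains no green edge forces every edge within distance two of $o$ to be black or orange, hence to survive in any set that keeps all orange edges other than $o$. The routes diverge in how the replacement set is built. The paper makes a single uniform swap: pick one edge $e$ covering the isolated vertex of $\mathcal{T}$ (or any edge of $\mathcal{E}\setminus\mathcal{T}$ if none), set $\mathcal{T}'=(\mathcal{T}\setminus\{o\})\cup\{e\}$, observe that the only orange edge possibly outside $\mathcal{T}$ must equal $e$, deduce $N_2[o]\setminus\{o\}\subseteq\mathcal{T}'$, and finish with a short two–case check (the degree-$1$ vertex of $o$ becomes the unique isolated vertex; any other $u\in o$ has a neighbour edge in $\mathcal{T}'$ that avoids the separating edge $\widetilde{e}$). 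You instead form $\mathcal{T}'=\mathcal{E}\setminus(\{o\}\cup\mathcal{R}_g)$, which amounts to $(\mathcal{T}\setminus\{o\})$ plus the at most one orange edge of $\mathcal{R}$, and then run a longer case analysis that singles out a residual pathological pair $\{x,v\}$ requiring one green edge to be restored, with a separate budget check that $|\mathcal{R}_g|\ge k$ in exactly that case.

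What each buys: the paper's swap is shorter and avoids the bookkeeping split into $\mathcal{R}_g$ and an exceptional correction; your version makes the containment $\mathcal{B}'\subseteq\mathcal{T}'$ and the size bound completely explicit up front, at the cost of the extra case layers. Two minor remarks on your write-up: the phrase ``$o$-essentiality'' is nonstandard---you mean simply that $o$ is the unique edge of $\mathcal{T}$ separating $\{u,v\}$; and in the last subcase you should state explicitly (you use it implicitly) that $x$ and $v$ are the \emph{only} two isolated vertices of $\mathcal{T}'$, so that after adding $g\ni v$ no new unseparated pair arises.
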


\begin{proof}
Let $(V,\mathcal{E}, \mathcal{B},k)$ be a {\sc Yes}-instance, and 
suppose there is an orange edge $o$ such that $N_2[o]$ contains no green edges.
It is sufficient to prove that
there exists a solution that does not include $o$.

Suppose $\mathcal{T}\subseteq\mathcal{E}$ is a solution and suppose it contains $o$. If there is a vertex $x$ which is not contained
in any edge of $\mathcal{T}$, consider an edge $e\in\mathcal{E}$ which contains it (which exists by Remark \ref{isol}). If the isolated
vertex $x$ does not exist, take any edge $e$ from $\mathcal{E}\setminus\mathcal{T}$.

Consider $\mathcal{T}'=(\mathcal{T}\setminus\{o\})\cup\{e\}$. We claim that $\mathcal{T}'$ is still a test cover.

First of all, note that if there is an orange edge $o'\in \mathcal{E}\setminus\mathcal{T}$, this edge must be $e$:
removing $o'$ creates an isolated vertex and $o'$ is the only edge containing that vertex, which means that $o'$ is the edge
that we add 
 to make $\mathcal{T}'$.
Therefore, we may assume $(N_2[o]\setminus\{o\})\subseteq\mathcal{T}'$.

Now, the only problem that can occur removing $o$ is that a vertex $x\in o$
is no longer separated from a vertex $y\in V\setminus o$.  Vertices $x$ and $y$ must be separated by some other edge in $\mathcal{E}\setminus\mathcal{T}'$ as $o$ is not black; furthermore this edge must contain $y$ and not $x$, as any other edge containing $x$ is in $(N_2[o]\setminus\{o\})\subseteq\mathcal{T}'$.
 Let this edge
be $\widetilde{e}$.

If $x$ is the degree $1$ vertex, then it is now the only isolated vertex and therefore it is separated
from any other vertex. If $x$ is a vertex of degree at least $2$, then there is an edge
(different from $o$) containing it; moreover, this edge cannot contain any vertices of $\widetilde{e}$, because in this case
$\widetilde{e}\in N_2[o]$. Hence, even deleting $o$, every vertex $x \in o$ is still separated from any other vertex,
which ensures that $\mathcal{T}'$ is a test cover.
\end{proof}


\begin{lemma}\label{neigh}
 Let $(V,\mathcal{E}, \mathcal{B},k)$ be an instance obtained after applying Rules \ref{rule:blackedge1}, \ref{rule:blackedge2} and \ref{rule:orangeedge}, and let $x$ be a non-isolated vertex. Then
 $x\in V(N_3[g])$ for some green edge $g$.
\end{lemma}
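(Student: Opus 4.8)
The plan is to argue by contradiction. Suppose $x$ is a non-isolated vertex with $x \notin V(N_3[g])$ for every green edge $g$. I want to show that under the coloring scheme and the reduction rules, this forces a situation that has already been ruled out. First I would note that every edge incident to $x$ is black, orange, or green. The hypothesis $x \notin V(N_3[g])$ in particular means no green edge contains $x$, and moreover no green edge lies within distance roughly $3$ (in the edge-adjacency sense) of any edge containing $x$. So I would look at the set $\mathcal{F}$ of edges containing $x$, together with their small neighborhoods, and observe that $N_2[e]$ contains no green edge for any $e \in \mathcal{F}$ that is orange — but Rule \ref{rule:orangeedge} has already been applied exhaustively, so there are no such orange edges left, i.e.\ every orange edge $o$ has a green edge in $N_2[o]$. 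Combined with Remark \ref{isol}, since $x$ is non-isolated, the coloring is well defined and the absence of a nearby green edge pushes every edge through $x$ to be black.

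The next step is to use the structure forced by having only black edges near $x$. I would invoke Lemma \ref{lem:blackedgerules}: after Rules \ref{rule:blackedge1} and \ref{rule:blackedge2} are applied exhaustively, black edges are pairwise ``laminar-incomparable'' in a strong sense — no black edge is contained in another edge, and no two black edges cross — so the black edges partition into a structure where the black edges incident to $x$ together with the rest of $\mathcal{E}$ must still separate $x$ from everything else using only edges in $N_2$ or $N_3$ of these black edges. The key point is that if all edges reachable from $x$ within a few steps are black, then deleting all green edges that are far from $x$ does not affect the separation of $x$ from its neighbors; by Lemma \ref{green} (applied to a suitable independent set of green edges, or by iterating) $\mathcal{E}$ minus all green edges far from $x$ is still a test cover, but then those green edges were redundant in a way that contradicts their being colored green — recall an edge $e$ is green precisely when $\mathcal{E}\setminus\{e\}$ is still a test cover and $e$ has no degree-one vertex, and the argument should show some green edge can be removed together with $x$'s neighborhood being untouched, contradicting minimality or the irreducibility.

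More carefully, I expect the cleanest route is the following: take an edge $e_0 \ni x$; since $x$ is non-isolated and $e_0$ is not green (by hypothesis $x\notin V(N_3[g])\supseteq V(e_0)$ would fail if $e_0$ green), $e_0$ is black or orange, and if orange then Rule \ref{rule:orangeedge} gives a green edge in $N_2[e_0]$, hence in $N_3[g]$-range of $x$ — contradiction. So $e_0$ is black. Now walk outward: consider $N_1[e_0]$, then $N_2[e_0]$, then $N_3[e_0]$; at each stage any edge encountered is, by the same argument relative to $x\notin V(N_3[g])$, either black or an orange edge whose mandated nearby green edge would land within $N_3$-distance of $x$. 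The orange case is the one to rule out at each distance, and it is ruled out exactly because the green edge promised by Rule \ref{rule:orangeedge} sits in $N_2$ of the orange edge, which is close enough to $x$. Thus $N_3[e_0]$ consists entirely of black edges (and non-green edges that turn out black). But a connected clump of black edges of bounded radius all containing or near $x$, after Rule \ref{rule:blackedge2} has removed all nesting and crossing, must be a single black edge or a disjoint family — and then $x$ has degree one among black edges in a way that triggers Rule \ref{rule:blackedge1}, or $x$ is separated only by black edges none of which can be replaced, contradicting that some edge through $x$ was colored non-black in the first place.

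The main obstacle, I expect, is bookkeeping the exact radius: pinning down why $N_3$ (rather than $N_2$ or $N_4$) is the right threshold. The ``$3$'' should come from composing (a) an orange edge's distance-$2$ guarantee of a green edge from Rule \ref{rule:orangeedge} with (b) one extra step to get from $x$'s incident edge to that orange edge. The delicate part is ensuring that every non-isolated vertex is reached, including vertices all of whose incident edges are black: there one must use that black edges, after reduction, cannot in isolation form a test cover around $x$ (since each needs a degree-one vertex or would be removable), so some green or orange edge must be ``nearby,'' and tracing that nearness through Rules \ref{rule:blackedge1}, \ref{rule:blackedge2} and the definition of the coloring is where the real care is needed.
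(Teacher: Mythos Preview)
Your proposal circles the right ideas but never closes the argument, and the detours you take (the ``walk outward'' to $N_2[e_0]$, $N_3[e_0]$ and the ``clump of black edges'' contradiction) are both unnecessary and incorrect as stated. The paper's proof is a direct three-line case split on the color of an edge $e_0 \ni x$: if $e_0$ is green we are done; if $e_0$ is orange, irreducibility under Rule~\ref{rule:orangeedge} yields a green $g \in N_2[e_0]$, so $x \in V(N_2[g])$; if $e_0$ is black, then $e_0$ intersects some other edge, and that edge must be green or orange, reducing to the previous two cases with one extra neighborhood step (giving $x \in V(N_3[g])$).

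The piece you state but fail to use is exactly what makes the black case immediate: after Rule~\ref{rule:blackedge2}, black edges are pairwise \emph{disjoint} (not merely laminar), since no black edge is contained in another edge and no two black edges cut each other. Hence any edge in $N_1(e_0)$ is automatically non-black. That such a neighbor exists follows from Rule~\ref{rule:blackedge1} (which eliminates a singleton black edge whose vertex has degree~$1$) together with the fact that $\mathcal{E}$ remains a test cover throughout the reductions (so a black edge with $\ge 2$ vertices and no neighbors would leave some pair unseparated). You never make this step; instead you try to propagate ``blackness'' to $N_2[e_0]$ and $N_3[e_0]$, but at distance $\ge 2$ an orange edge's promised green neighbor could lie in $N_4$ of $x$, not $N_3$, so the radius bookkeeping you worry about actually fails there. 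And the final contradiction you sketch---that a clump of black edges forces Rule~\ref{rule:blackedge1}---does not work, since Rule~\ref{rule:blackedge1} only applies to singleton black edges, not to arbitrary black edges containing a degree-one vertex. The fix is simply to stop after one step out from $e_0$ and use disjointness of black edges directly.
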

\begin{proof}
If $x$ is contained in a green edge, we are done. If $x$ is contained in an orange edge $o$, Lemma \ref{orange} implies that there
exists a green edge $g$ in $N_2[o]$, which means that $x\in V(N_2[g])\subseteq V(N_3[g])$. If, finally, $x$ is contained in a black edge $b$,
this edge must intersect one other edge, that can be either green or orange (due to Rules \ref{rule:blackedge1} and \ref{rule:blackedge2}).
In both cases, $x\in V(N_3[g])$ for some green edge $g$.
\end{proof}

\begin{theorem}
There is a kernel for {\sc Subset-Test-$r$-Cover}$(m-k)$  with $|V|\leq (k-1)k^5r^{16}+1$ and $|\mathcal{E}|\leq (k-1)k^5r^{16}+k$. This gives a kernel for {\sc Test-$r$-Cover}$(m-k)$ with $|V|\le 5(k-1)k^5r^{16}+4k+1$ and $|\mathcal{E}|\le 3(k-1)k^5r^{16}+3k$.
\end{theorem}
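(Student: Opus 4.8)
The plan is to combine the reduction rules and lemmas already developed (Rules~\ref{rule:blackedge1}--\ref{rule:orangeedge}, Lemma~\ref{lem:degree}, Lemma~\ref{green}, Lemma~\ref{neigh}) into a single bounding argument on the size of an irreducible instance. First I would assume that $(V,\mathcal{E},\mathcal{B},k)$ is a \textsc{No}-instance-free reduced instance: Rules~\ref{rule:blackedge1} and \ref{rule:blackedge2} have been applied exhaustively, every vertex has degree at most $kr^2$ (Lemma~\ref{lem:degree}), the edges have been colored black/orange/green, and Rule~\ref{rule:orangeedge} has been applied exhaustively. If $k \le 0$ or $\mathcal{E}$ is not a test cover we answer trivially, so we may assume $k \ge 1$. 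The main quantitative claim is that the number of \emph{green} edges is bounded by a polynomial in $k$ (independent of $n$ and $m$), from which everything else follows.

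To bound the green edges, I would argue that if there are ``too many'' green edges then we can extract a large sub-collection $G$ of green edges that are pairwise far apart (i.e.\ $N_1[g_1]\cap N_1[g_2]=\emptyset$ for distinct $g_1,g_2\in G$), and apply Lemma~\ref{green} to conclude $\mathcal{E}\setminus G$ is a test cover — so if $|G|\ge k$ the instance is a \textsc{Yes}-instance and we can stop (output a trivial \textsc{Yes}-kernel). The size of such a ``scattered'' family is controlled by a greedy/packing argument: since every vertex has degree at most $kr^2$ and every edge has at most $r$ vertices, the ball $N_2[g]$ around a green edge $g$ contains at most roughly $(kr^2\cdot r)^2 \le k^2 r^6$ edges, so a greedy selection among $N$ green edges yields a scattered family of size at least $N/(k^2r^6)^{O(1)}$; forcing this to be $<k$ gives $N = O(k\cdot k^{c} r^{c'})$ for explicit small constants, which I would tune to land at the stated $(k-1)k^5r^{16}$-type bound. (The exponents $5$ and $16$ in the theorem presumably come from needing $N_3$-balls rather than $N_2$-balls and from the degree bound $kr^2$ raised to a few powers; I would carry the constants through carefully at this point.)

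Next I would bound the total vertex set. By Lemma~\ref{neigh}, every non-isolated vertex lies in $V(N_3[g])$ for some green edge $g$; there is at most one isolated vertex. Since $|N_3[g]|$ is at most (degree bound $\times$ edge size) iterated three times, i.e.\ at most $(kr^2\cdot r)^3 = k^3 r^9$ edges, each contributing at most $r$ vertices, we get $|V| \le (\#\text{green edges})\cdot k^3 r^9 \cdot r + 1$, which with the green-edge bound yields $|V| \le (k-1)k^5 r^{16}+1$ after arranging the constants. For the edge bound I would split $\mathcal{E}$ into black, orange, and green: each non-black, non-isolated-creating edge is ``charged'' to a nearby green edge as above; black edges, by the analysis behind Lemma~\ref{neigh} and Rules~\ref{rule:blackedge1}--\ref{rule:blackedge2}, each intersect a green or orange edge and so are also within a bounded ball of some green edge; and the number of edges within a bounded ball of the green edges is again at most (green count)$\times k^3r^9$. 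Adding a small additive slack of the form $+k$ (to account, e.g., for edges not reachable from any green edge only in degenerate sub-cases, or for the book-keeping of deleted edges) gives $|\mathcal{E}| \le (k-1)k^5 r^{16}+k$.

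Finally, the reduction from \textsc{Subset-Test-$r$-Cover}$(m-k)$ back to \textsc{Test-$r$-Cover}$(m-k)$: a black edge $b$ must be forced into every solution, which can be simulated without colors by attaching a small gadget to $b$ that makes $b$ non-removable (e.g.\ duplicating or padding so that removing $b$ destroys separation of some newly added pair), at the cost of a bounded number of extra vertices and edges per black edge and a bounded increase in $k$. Since there are at most $O(k^6 r^{16})$ black edges in the kernelized instance (bounded like the rest of $\mathcal{E}$), this blows up $|V|$ and $|\mathcal{E}|$ only by a constant factor, giving the stated $5(k-1)k^5r^{16}+4k+1$ and $3(k-1)k^5r^{16}+3k$ bounds. \textbf{The main obstacle} I anticipate is the packing argument for green edges: making the ``scattered family via greedy selection'' step precise, controlling exactly which neighborhood ($N_1$, $N_2$, or $N_3$) is needed for Lemma~\ref{green} to apply versus for the charging in Lemma~\ref{neigh}, and propagating the degree bound $kr^2$ through three levels of neighborhoods while keeping the final polynomial at the advertised degree — that is where all the real work and all the exponents live; the rest is essentially book-keeping once the green count is pinned down.
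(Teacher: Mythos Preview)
Your overall plan matches the paper's: exhaustively reduce, greedily build a maximal family $G$ of green edges with $N_1[g_1]\cap N_1[g_2]=\emptyset$, answer \textsc{Yes} if $|G|\ge k$ via Lemma~\ref{green}, and otherwise use $|G|\le k-1$ together with Lemma~\ref{neigh} and the degree bound to control $|V|$. In the paper this is done in one shot: every green edge lies in $N_2[G]$, so by Lemma~\ref{neigh} every non-isolated vertex lies in $V(N_5[G])$, and $|N_1[\mathcal{F}]|\le |\mathcal{F}|\cdot r\cdot kr^2$ iterated five times gives $|V|\le (k-1)(kr^3)^5\cdot r+1=(k-1)k^5r^{16}+1$. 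Your two-step version (bound the green edges first, then take $N_3$ from each) is equivalent and lands on the same number.

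Two places where your plan diverges from the paper are worth flagging. First, the edge bound: the paper does \emph{not} charge edges to neighborhoods of green edges. Instead it observes that, after Rule~\ref{rule:blackedge2}, the black edges are pairwise disjoint, so one can greedily extend $\mathcal{B}$ to a test cover of size at most $|V|$; hence if $|\mathcal{E}|-k\ge |V|$ the answer is \textsc{Yes}, and otherwise $|\mathcal{E}|\le |V|+k-1$. That is the true source of the ``$+k$'', not slack or degenerate cases. Your neighborhood-charging would also give a polynomial bound, but a larger one.

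Second, the gadget step: the paper's gadget adds exactly four vertices $x_1,x_2,x_3,x_4$ and two edges $\{x_1,x_2,x_3\},\{x_3,x_4\}$, replaces $b$ by $b\cup\{x_1\}$, and leaves $k$ unchanged (there is no ``bounded increase in $k$''). A subtlety you are missing: one must check the new edge $b\cup\{x_1\}$ still has size $\le r$. The paper handles this by noting that any black edge $b$ for which $\mathcal{E}\setminus\{b\}$ is already not a test cover can simply be uncolored, while the remaining black edges were all created inside the algorithm of Lemma~\ref{lem:degree} and therefore have $|b|\le r-1$. With four vertices and two edges per black edge, one gets $|V''|\le |V'|+4|\mathcal{E}'|$ and $|\mathcal{E}''|\le 3|\mathcal{E}'|$, which yields the stated final bounds.
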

\begin{proof}

 Let $(V,\mathcal{E}, \mathcal{B},k)$ be an instance irreducible by Rules \ref{rule:blackedge1}, \ref{rule:blackedge2} and \ref{rule:orangeedge}.
Construct greedily a set $G$ of green edges which satisfy the hypothesis of Lemma \ref{green}. If $|G|\geq k$, we answer {\sc Yes} using Lemma \ref{green}.
Otherwise, $|G|\leq k-1$ and every green edge which is not in $G$ must be in $N_2[G]$. By Lemma \ref{neigh},
this means that every vertex (except the one of degree zero, if it exists)
must be in $V(N_3[N_2[G]])=V(N_5[G])$.

However, $|V(N_5[G])|\leq r|N_5[G]|$ and, given $\mathcal{F}\subseteq\mathcal{E}$, $|N_1[\mathcal{F}]|\leq|\mathcal{F}|(r)(kr^2)$ (by Lemma \ref{lem:degree}),
which means that $|N_5[G]|\leq|G|(kr^3)^5$.
To sum up, $|V(N_5[G])|\leq (k-1)k^5r^{16}$, which gives us the required bound on the number of vertices.

To bound the number of edges, we show that there is a solution of size at most $|V|$. First let $\mathcal{T}$ be the set of black edges. By Rule \ref{rule:blackedge2}, the black edges are disjoint and therefore $|\mathcal{T}|\le |V|$ and $\mathcal{T}$ induces at least $|\mathcal{T}|$ classes. Now if $\mathcal{T}$ is not a test cover, add an edge to $\mathcal{T}$ that increases the number of induced classes. Then eventually we have that $|\mathcal{T}|\le |V|$ and $\mathcal{T}$ induces $|V|$ classes as required.
Therefore if $|\mathcal{E}|-k\ge |V|$, the answer is {\sc Yes}. Hence $|\mathcal{E}| \le |V|+k-1 \le (k-1)k^5r^{16}+k$,
proving the kernel for {\sc Subset-Test-$r$-Cover}$(m-k)$.

We now prove the kernel for {\sc Test-$r$-Cover}$(m-k)$. First transform an instance  $(V, \mathcal{E}, k)$ into an equivalent instance  $(V, \mathcal{E}, \mathcal{B}, k)$ of {\sc Subset-Test-$r$-Cover}$(m-k)$, by letting $\mathcal{B}=\emptyset$. Then reduce this instance to a kernel $(V', \mathcal{E}', \mathcal{B}', k')$.
Now we reduce $(V', \mathcal{E}', \mathcal{B}', k')$ to an instance $(V'', \mathcal{E}'', k'')$ of {\sc Test-$r$-Cover}$(m-k)$, completing the proof.
If an edge $b$ is colored black, and ${\cal E}\setminus \{b\}$ is not a test cover, then uncolor $b$. Otherwise, observe that by construction, $b$ must have been created during the algorithm of Lemma \ref{lem:degree}, and in such a case $b$ was contained within $k''+1$ other edges. Therefore $b$ contains at most $r-1$ vertices. We will replace $b$ with a small gadget that is `equivalent' to $b$.


To make the gadget, add vertices
 $x_1,x_2,x_3,x_4$ to the instance, replace $b$ with $b\cup \{x_1\}$ and add edges $e'=\{x_1,x_2,x_3\}$ and
$e''=\{x_3,x_4\}$. Observe edge $e'$ is necessary to separate $x_3$ from $x_4$, $e''$ is necessary to separate $x_2$ from $x_3$ and $b\cup
\{x_1\}$ is necessary to separate $x_1$ from $x_2$.
Hence all three edges must be in a test cover. Hence, in any test cover solution for the new instance, if we replace the gadget with $b$,
 we obtain a test cover solution for the original instance. Furthermore, such a solution will contain $b$. In this sense, the problems are equivalent.

Finally observe that for each original black edge we added four new vertices and two new edges. Hence $|V''|\le |V'|+4|\mathcal{E}'| \le 5(k-1)k^5r^{16}+4k+1$ and $|\mathcal{E}''|\le |\mathcal{E}'|+2|\mathcal{E}'| \le 3(k-1)k^5r^{16}+3k$.
\end{proof}

%

\section{Polynomial kernel for {\sc Test-$r$-Cover}$(n-k)$}\label{sec:n-k}
In this section, we show that {\sc Test-$r$-Cover}$(n-k)$ admits a kernel which is polynomial in the number of vertices. 
We may assume that $k\ge 2$ as  the answer  to {\sc Test-$r$-Cover}$(n-k)$ for $k\le 1$ is always positive.


We reuse some terminology from \cite{CrowstonGJSY12}.
We say $\mathcal{T} \subseteq \mathcal{E}$ is a \emph{$k$-mini test cover} if $|\mathcal{T}|\le 2k$ and the number of classes induced by $\mathcal{T}$ is at least $|\mathcal{T}|+k$.

The following result is a consequence of Lemma 1 and Theorem 2 of \cite{CrowstonGJSY12}.
\begin{lemma}\label{mini}
Assume $\mathcal{E}$ is a test cover of $V$. Then the following are equivalent:
\begin{enumerate}
 \item $\mathcal{E}$ contains a test cover of size at most $n-k$.
 \item There exists $\mathcal{T}\subseteq \mathcal{E}$ such that the number of classes induced by $\mathcal{T}$ is at least $|\mathcal{T}|+k$.
 \item $\mathcal{E}$ contains a $k$-mini test cover.
\end{enumerate}
\end{lemma}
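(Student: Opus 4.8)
The plan is to establish the cycle of implications $(1)\Rightarrow(2)\Rightarrow(3)\Rightarrow(1)$, since this is the most economical way to prove equivalence of three statements. The implication $(2)\Rightarrow(3)$ is where the quantitative bound $|\mathcal{T}|\le 2k$ in the definition of a $k$-mini test cover has to be earned, so I expect that to be the main technical step; the other two implications should be comparatively short.

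For $(1)\Rightarrow(2)$: suppose $\mathcal{E}$ contains a test cover $\mathcal{T}$ with $|\mathcal{T}|\le n-k$. A test cover induces exactly $n$ classes (the singletons), so $\mathcal{T}$ induces $n\ge |\mathcal{T}|+k$ classes, which is exactly statement~(2). For $(3)\Rightarrow(1)$: suppose $\mathcal{T}$ is a $k$-mini test cover, so it induces $t\ge|\mathcal{T}|+k$ classes. Since $\mathcal{E}$ is a test cover, by Lemma~\ref{lem:classes} applied inside the hypergraph whose edge set is $\mathcal{E}$ (viewing $\mathcal{T}$ as a partial solution and repeatedly adding edges of $\mathcal{E}$ that split some class), we can extend $\mathcal{T}$ to a test cover $\mathcal{T}'\supseteq\mathcal{T}$ by adding at most $n-t$ edges, one per new class created, until $n$ classes are induced. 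Then $|\mathcal{T}'|\le|\mathcal{T}|+(n-t)\le n-k$, giving~(1). (This is essentially the argument behind Corollary~\ref{cor:bondy}, but starting from $\mathcal{T}$ rather than from scratch.)

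The interesting step is $(2)\Rightarrow(3)$. Given any $\mathcal{T}\subseteq\mathcal{E}$ inducing $t\ge|\mathcal{T}|+k$ classes, I want to extract a subcollection of size at most $2k$ that still induces at least $|\mathcal{T}|+k$ classes; more precisely, I want a subset $\mathcal{S}\subseteq\mathcal{T}$ with (number of induced classes) $\ge |\mathcal{S}|+k$ and $|\mathcal{S}|\le 2k$. The idea is to delete edges from $\mathcal{T}$ one at a time as long as deletion does not destroy the defining inequality. Start with $\mathcal{S}=\mathcal{T}$; while there is an edge $e\in\mathcal{S}$ whose removal still leaves $\mathcal{S}\setminus\{e\}$ inducing at least $|\mathcal{S}\setminus\{e\}|+k = |\mathcal{S}|-1+k$ classes, remove it. Let $\mathcal{S}$ be the final (minimal) collection with $c(\mathcal{S})\ge|\mathcal{S}|+k$, where $c(\cdot)$ denotes the number of induced classes. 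By minimality, for every $e\in\mathcal{S}$ we have $c(\mathcal{S}\setminus\{e\}) \le |\mathcal{S}|+k-2$, i.e.\ removing any single edge drops the class count by at least $2$. Now the key counting observation: removing one edge can only decrease the number of classes (each old class either survives or merges with others), and in fact $c(\mathcal{S}) \le c(\mathcal{S}\setminus\{e\}) + (|e|\text{-related bound})$; but more usefully, summing, $\sum_{e\in\mathcal{S}}\bigl(c(\mathcal{S})-c(\mathcal{S}\setminus\{e\})\bigr)\ge 2|\mathcal{S}|$, while on the other hand this sum telescopes/bounds against $c(\mathcal{S})-c(\emptyset) = c(\mathcal{S})-1$ by a standard submodularity-type inequality for the class-count function (adding edges one at a time, the marginal increases are bounded, so the total ``charge'' $\sum_e (c(\mathcal{S})-c(\mathcal{S}\setminus\{e\}))$ is at most $c(\mathcal{S})-1$). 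Combining $2|\mathcal{S}| \le c(\mathcal{S})-1$ with $c(\mathcal{S})\le$ (its value) does not immediately give $|\mathcal{S}|\le 2k$; instead I should compare with the target: from $c(\mathcal{S})\ge|\mathcal{S}|+k$ and the minimality bound one derives $|\mathcal{S}|+k \le c(\mathcal{S}) \le 2k + (|\mathcal{S}|-1) - \text{something}$, which needs to be pushed to $|\mathcal{S}|\le 2k$. Since the excerpt tells us this is a consequence of Lemma~1 and Theorem~2 of~\cite{CrowstonGJSY12}, I would cite that minimality/exchange argument rather than re-deriving the submodularity bound from scratch; the cleanest route is: take $\mathcal{S}\subseteq\mathcal{T}$ minimal with $c(\mathcal{S})\ge|\mathcal{S}|+k$, observe each edge's removal costs at least $2$ classes, note $c(\mathcal{S})\ge 2|\mathcal{S}|$ is impossible to exceed $c(\emptyset)+\sum(\text{marginals})$, hence $|\mathcal{S}|+k\le c(\mathcal{S})$ and $c(\mathcal{S})\le |\mathcal{S}|+k$ force... — here is exactly where I must be careful. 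The honest main obstacle: pinning down why minimality yields $|\mathcal{S}|\le 2k$ and not something larger; the resolution is that if $|\mathcal{S}|>2k$ then $c(\mathcal{S})\ge|\mathcal{S}|+k > 3k$ while the ``at least $2$ drop per edge'' condition together with $c(\emptyset)=1$ forces $c(\mathcal{S})\le 1+\sum_{e}(\text{marginal of }e)$ and each marginal is at most... — I would lean on \cite{CrowstonGJSY12} for this exchange inequality. Once $|\mathcal{S}|\le 2k$ is in hand, $\mathcal{S}$ is a $k$-mini test cover, completing $(2)\Rightarrow(3)$ and hence the lemma.
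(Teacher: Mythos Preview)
The paper does not give a proof of this lemma at all: it simply records that the statement is a consequence of Lemma~1 and Theorem~2 of~\cite{CrowstonGJSY12}. So your proposal already goes further than the paper by writing out the easy implications. Your arguments for $(1)\Rightarrow(2)$ and $(3)\Rightarrow(1)$ are correct and are exactly the natural ones (the latter is indeed the extension argument behind Lemma~\ref{lem:classes}/Corollary~\ref{cor:bondy}).

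The gap is entirely in $(2)\Rightarrow(3)$, and it is a real one, not just a missing detail. Your proposed route is: take $\mathcal{S}$ minimal with $c(\mathcal{S})\ge|\mathcal{S}|+k$, deduce that each single-edge removal drops $c$ by at least~$2$, and then bound $\sum_{e\in\mathcal{S}}(c(\mathcal{S})-c(\mathcal{S}\setminus\{e\}))$ above by $c(\mathcal{S})-1$ via a ``submodularity-type'' inequality. That inequality is \emph{false}. For instance, on vertex set $\{1,2,3,4\}$ take $e_1=\{1,3\}$ and $e_2=\{1,2\}$: then $c(\{e_1,e_2\})=4$, $c(\{e_1\})=c(\{e_2\})=2$, so the left-hand sum is $4$ while $c(\mathcal{S})-1=3$. (Indeed $c$ is neither submodular nor supermodular.) Even setting this aside, you yourself note that the chain of inequalities you write down does not actually yield $|\mathcal{S}|\le 2k$; the argument trails off without reaching the conclusion.

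You end up doing what the paper does---citing \cite{CrowstonGJSY12} for the hard step---which is fine, but you should not present the minimality/submodularity sketch as if it were a viable alternative: it is not salvageable as written. If you want a self-contained proof of $(2)\Rightarrow(3)$ you will need a genuinely different mechanism than ``sum of marginal drops $\le$ total increase''; the argument in \cite{CrowstonGJSY12} is more delicate than that.
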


The following result follows from Lemma 1 and the proof of Theorem 2 in \cite{CrowstonGJSY12}.

\begin{lemma}\label{lem:findF}
 In polynomial time, we may either find a $k$-mini test cover, or find an $\mathcal{F} \subseteq \mathcal{E}$ such that:
\begin{enumerate}
 \item $|\mathcal{F}| < 2k$
 \item $\mathcal{F}$ induces less than $|\mathcal{F}|+k$ classes.
  \item Each edge in $\mathcal{E}$ cuts at most one class induced by $\mathcal{F}$.
\item  For any $e,e' \in \mathcal{E}$ and any class $K$ induced by
$\mathcal{F}$,at least one of $(e\cap e')\cap K$,
$(e \backslash e')\cap K$, $(e \backslash e')\cap K$ and $K \backslash(e \cup e')$ is empty.
\end{enumerate}
\end{lemma}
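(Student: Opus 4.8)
The plan is a greedy local search that builds $\mathcal{F}$ one or two edges at a time. For $\mathcal{T}\subseteq\mathcal{E}$ write $\mathrm{exc}(\mathcal{T})$ for the \emph{excess}, i.e. the number of classes induced by $\mathcal{T}$ minus $|\mathcal{T}|$. Start with $\mathcal{F}=\{e\}$ for any edge $e$ with $\emptyset\neq e\neq V$ (one exists since $\mathcal{E}$ is a test cover and we may assume $n\ge 2$), so that $\mathrm{exc}(\mathcal{F})=1$. Then repeat the following step. If $\mathrm{exc}(\mathcal{F})\ge k$, stop and output $\mathcal{F}$ (which will be a $k$-mini test cover). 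Otherwise, search for a single edge $e\in\mathcal{E}$ with $\mathrm{exc}(\mathcal{F}\cup\{e\})>\mathrm{exc}(\mathcal{F})$, or a pair $e,e'\in\mathcal{E}$ with $\mathrm{exc}(\mathcal{F}\cup\{e,e'\})>\mathrm{exc}(\mathcal{F})$; if one is found add it to $\mathcal{F}$ and repeat; if neither exists, stop and output $\mathcal{F}$.

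I would maintain the invariant $|\mathcal{F}|\le 2\,\mathrm{exc}(\mathcal{F})$. It holds initially, and is preserved by each step: adding a single improving edge raises $|\mathcal{F}|$ by $1$ and $\mathrm{exc}$ by at least $1$ (excess is integer-valued), while adding an improving pair raises $|\mathcal{F}|$ by $2$ and $\mathrm{exc}$ by at least $1$; in both cases $2\,\mathrm{exc}$ grows by at least as much as $|\mathcal{F}|$. If the loop stops because $\mathrm{exc}(\mathcal{F})\ge k$, then the previous excess was at most $k-1$ and the last step added at most two edges, so $|\mathcal{F}|\le 2(k-1)+2=2k$, and the number of induced classes equals $|\mathcal{F}|+\mathrm{exc}(\mathcal{F})\ge|\mathcal{F}|+k$; thus $\mathcal{F}$ is a genuine $k$-mini test cover. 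If the loop stops because no improving move exists, then $\mathrm{exc}(\mathcal{F})\le k-1$, so $|\mathcal{F}|\le 2(k-1)<2k$ (Property 1) and the number of classes $=|\mathcal{F}|+\mathrm{exc}(\mathcal{F})<|\mathcal{F}|+k$ (Property 2).

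It remains to see that ``no improving move'' forces Properties 3 and 4. For Property 3, suppose $e\in\mathcal{E}$ cut two distinct classes $K_1,K_2$ of $\mathcal{F}$. The classes of $\mathcal{F}\cup\{e\}$ are obtained from those of $\mathcal{F}$ by splitting each class $K$ into the nonempty sets among $K\cap e$ and $K\setminus e$; since $e$ cuts $K_1$ and $K_2$, each of these splits into two nonempty parts, so the class count rises by at least $2$ and $\mathrm{exc}(\mathcal{F}\cup\{e\})\ge\mathrm{exc}(\mathcal{F})+1$, contradicting that $e$ was not improving. For Property 4, suppose $e,e'\in\mathcal{E}$ and a class $K$ of $\mathcal{F}$ had all four of $e\cap e'\cap K$, $(e\setminus e')\cap K$, $(e'\setminus e)\cap K$ and $K\setminus(e\cup e')$ nonempty (these are the four cells the statement intends; its third listed set is a typographical repeat of the second). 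Within any class $L$ of $\mathcal{F}$, two vertices lie in the same class of $\mathcal{F}\cup\{e,e'\}$ exactly when they agree on membership in $e$ and on membership in $e'$, so $L$ breaks into at most these four ``Venn cells''; for $L=K$ it breaks into exactly four, and every other class into at least one, so the class count rises by at least $3$, giving $\mathrm{exc}(\mathcal{F}\cup\{e,e'\})\ge\mathrm{exc}(\mathcal{F})+1$ and contradicting that $\{e,e'\}$ was not improving.

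For the running time, $\mathrm{exc}(\mathcal{F})$ is a nonnegative integer that strictly increases at every step and the loop stops once it reaches $k$, so there are at most $k-1$ steps; each step inspects $O(m)$ single edges and $O(m^2)$ pairs, and computing the classes induced by a given set of edges is polynomial, so the procedure runs in polynomial time. The only delicate point, in my view, is the size bookkeeping — keeping the invariant $|\mathcal{F}|\le 2\,\mathrm{exc}(\mathcal{F})$ sharp enough that the output is either a $k$-mini test cover of size at most $2k$ or an $\mathcal{F}$ of size strictly below $2k$ — together with the elementary counting in Properties 3 and 4 showing that any violation produces an improving move; this essentially repackages Lemma~1 and the proof of Theorem~2 of \cite{CrowstonGJSY12}.
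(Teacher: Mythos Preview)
Your proof is correct and essentially reconstructs the argument the paper imports from \cite{CrowstonGJSY12}: the paper does not spell out a proof here but only cites Lemma~1 and Theorem~2 of that reference, and your greedy excess-increasing local search (adding one edge when it gains at least two classes, or two edges when they gain at least three) is precisely that construction. Your bookkeeping via the invariant $|\mathcal{F}|\le 2\,\mathrm{exc}(\mathcal{F})$ and the counting arguments for Properties~3 and~4 are clean and accurate.
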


%

Now assume we have found such an $\mathcal{F}$ and let $C_1,\dots,C_l,G$ be the classes induced by $\mathcal{F}$ 
(where $G$ is the class of vertices which are not contained in any edge of $\mathcal{F}$ and $l<3k$).
Let $\mathcal{C}$ be the set of classes $C_1,\dots,C_l$, and let $C$ be the set of vertices contained in such classes.
Let $\mathcal{G}$ be the set of edges that intersect $G$.
For each edge  $e \in \mathcal{G}$, we say $e \cap G$ is the \emph{$G$-portion} of $e$.
A subset $\Gamma$ of $G$ is a \emph{component} if $\Gamma$ is the $G$-portion of an edge $e \in  \mathcal{G}$ and $\Gamma \not\subset e'\cap G$ for all edges $e'\in \mathcal{G}$.
Notice that the number of vertices in $C$ is bounded by $(2k-1)r$ because these vertices are contained in edges
of $\mathcal{F}$ and every edge contains at most $r$ vertices. Also notice that every component of $G$ has at most $r$ vertices.

%


\begin{theorem}\label{thm:thm8}
Given an instance $(V,\mathcal{E},k)$, it is possible to reduce it in polynomial time to an equivalent instance with at most
$18k^3r$
 vertices and 
$(18k^3r)^r$
 edges.
\end{theorem}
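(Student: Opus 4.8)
The goal is a polynomial (in $k$) kernel for {\sc Test-$r$-Cover}$(n-k)$. By Lemma~\ref{mini} the instance is a {\sc Yes}-instance iff $\mathcal{E}$ contains a $k$-mini test cover, so we are free to throw away vertices and edges as long as we preserve the existence of such a small structure. The plan is to start from the decomposition provided by Lemma~\ref{lem:findF}: either we already find a $k$-mini test cover (and answer {\sc Yes}), or we obtain an $\mathcal{F}$ with $|\mathcal{F}|<2k$, inducing classes $C_1,\dots,C_l,G$ with $l<3k$, where $G$ collects the vertices missed by $\mathcal{F}$ and each $C_i$ is cut by only a limited family of edges. The vertices in $C=\bigcup C_i$ already number at most $(2k-1)r$, so the entire task reduces to \emph{bounding the contribution of $G$}: we must shrink $G$ and the edges meeting $G$ to size polynomial in $k$ while preserving whether a $k$-mini test cover exists.

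**Key steps.** First I would record that $\mathcal{E}\setminus\mathcal{F}$ still induces at least $|\mathcal{F}|+k$ classes \emph{together with} $\mathcal{F}$ (otherwise $\mathcal{F}$ already witnesses too few classes), so any $k$-mini test cover we build may be assumed to use $\mathcal{F}$ plus a bounded number of extra edges; concretely, if $\mathcal{T}$ is a $k$-mini test cover then $|\mathcal{T}|\le 2k$ and we can assume $\mathcal{F}\subseteq\mathcal{T}$ after an exchange argument, so $\mathcal{T}$ uses at most $2k$ edges total, at most $2k$ of which touch $G$. Second, I would analyse the \emph{components} of $G$: each component $\Gamma$ is a $G$-portion of an edge, has at most $r$ vertices, and by property~4 of Lemma~\ref{lem:findF} the interaction of any two edges inside a single class is very restricted, so the components partition $G$ into blocks on which the trace of $\mathcal{E}$ is ``laminar-like''. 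Third — the heart of the argument — I would bound the number of distinct ``types'' of $G$-portions: since each extra edge of a $k$-mini test cover can usefully split a class in at most a few ways, and there are at most $2k$ extra edges, only $O(\mathrm{poly}(k))$ components can ever be needed; the rest are redundant and can be deleted, together with the private vertices they carried. A counting bound of the shape ``$\le 6k$ relevant classes, each contributing $\le O(k^2)$ useful component-types, each of size $\le r$'' is what yields the stated $18k^3 r$ vertices, and then since every surviving edge is determined by which $\le r$ of these vertices it contains (its intersection with the bounded vertex set), the edge count is at most $(18k^3r)^r$.

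**Main obstacle.** The delicate part is the marking/deletion argument for components of $G$: one must show that if there are many components but we keep only a carefully chosen polynomial-size subfamily (say, $2k$ ``representatives'' for each way an edge can cut one of the $\le 6k$ classes), then a $k$-mini test cover in the reduced instance lifts back to one in the original, and conversely. This requires an exchange lemma: given a $k$-mini test cover $\mathcal{T}$ that uses a deleted component's edge, one replaces that edge by a retained edge of the ``same type'' without decreasing the number of induced classes — using exactly the structural guarantees (each edge cuts at most one class; the four-way emptiness condition) of Lemma~\ref{lem:findF}, plus Lemma~\ref{lem:classes} to re-grow classes if the swap loses one. Getting the bookkeeping of ``type'' precise enough that the swap is always possible, while keeping the number of retained components polynomial, is where essentially all the work lies; the final vertex and edge counts then drop out of the crude products indicated above.
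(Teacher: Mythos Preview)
Your high-level strategy matches the paper's: invoke Lemma~\ref{lem:findF}, bound $|C|$ trivially, then mark a polynomial-in-$k$ family of ``representative'' components of $G$, delete the rest, and prove an exchange lemma showing that any $k$-mini test cover using a deleted edge can be repaired using only retained edges. The paper's marking is indexed by \emph{ordered pairs} $(C_i,C_j)$ (keep $2k$ components witnessing an edge that contains $C_i$ and misses $C_j$) and additionally by single classes $C_i$ (keep $2k{+}1$ more), which is where the $(3k)^2\cdot 2k\approx 18k^3$ count comes from; your ``$6k$ classes times $O(k^2)$ types'' lands on the right order of magnitude, but you will need precisely this pair-based indexing for the exchange to go through.

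One step in your outline is off and should be dropped: the claim that ``we can assume $\mathcal{F}\subseteq\mathcal{T}$ after an exchange argument''. A $k$-mini test cover has at most $2k$ edges by definition, and $|\mathcal{F}|$ can be as large as $2k-1$, so forcing $\mathcal{F}\subseteq\mathcal{T}$ would leave essentially no room for edges touching $G$; moreover $\mathcal{F}$ itself induces fewer than $|\mathcal{F}|+k$ classes (property~2 of Lemma~\ref{lem:findF}), so it is useless as a core of $\mathcal{T}$. The paper never makes this assumption: its exchange argument works directly with an arbitrary $k$-mini test cover $\mathcal{T}$ and uses $\mathcal{F}$ only to name the classes $C_i$ and $G$. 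The replacement of a deleted edge $e\in\mathcal{T}$ is then handled by a fairly delicate case analysis on the classes $C_s^*$ of $\mathcal{T}'=\mathcal{T}\setminus\{e\}$ that $e$ cuts (distinguishing whether $C_s^*$ meets $C$ or lies entirely in $G$, and whether $e$ contains some $C_i\cap C_s^*$ but misses $C_j\cap C_s^*$), where the marked edges $e_{i,j}$ and $e_i,e_i'$ are substituted in so that each substitute both re-cuts the needed $C_s^*$ \emph{and} carves a fresh class out of the large untouched region $C_0^*\cap G$. This is exactly the bookkeeping you flag as the main obstacle, and it does not simplify by assuming $\mathcal{F}\subseteq\mathcal{T}$.
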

\begin{proof}
We may assume that $|V|>(7k+2)r$ as otherwise our instance is already a kernel (every edge has at most $r$ vertices and so 
the number of edges is at most $((7k+2)r)^r$).
Observe that if there exists an edge $e$ such that $|e \cap G| \ge |G|/2$ then $|G|\le 2|e|\le 2r$, and we conclude that
$|V| \le |C|+2r \le (2k+1)r$, 
a contradiction. Therefore, $|e \cap G| < |G|/2$, and so by Part 4 of Lemma \ref{lem:findF} if 
$X, Y$ are different $G$-portions then either 
 $X \subset Y$, $Y \subset X$ or $X \cap Y = \emptyset$.

Apply the following algorithm:

\begin{myindentpar}{0.5cm}
\textbf{Step 1:} For each pair $(C_i,C_j)$ ($i\neq j$) in turn, mark $2k$ unmarked components of 
$G$ which contain the $G$-portion of an edge containing $C_i$ and having empty intersection with $C_j$ (mark these edges too). If there are less than 
 $2k$ such components, mark them all. Let $E_{i,j}$ denote the set of marked edges.

For each $C_i$ in turn, mark $2k+1$ unmarked components of 
$G$ which contain the $G$-portion of an edge containing $C_i$ (mark these edges too). If there are less than
$2k+1$ such components, mark them all. Let $E_i$ denote the set of marked edges.

\textbf{Step 2:} Delete every edge in $\mathcal{G}$ whose $G$-portion is not contained in a marked component of $G$. 
Delete every vertex which is not contained in any edge anymore, except one vertex $y$ (if it exists).
\end{myindentpar}

Let $\mathcal{E}'$ be the set of edges which have not been deleted by this algorithm. Notice that the number of marked components in $G$ is at most
$(3k-1)(3k-2)(2k) + (3k-1)(2k+1) < (3k-1)^2(2k+1) = 18k^3-3k^2-4k+1 < 18k^3-2k$
(here we use the assumption that $k \ge 1$).
 Let $G'$ be the set of vertices of $G$ which have not been deleted
by the algorithm. Notice that 
$|G'|\leq (18k^3-2k)r$.

In the instance which is produced, 
$|V'|=|C|+|G'| \leq 18k^3r$, and since each edge contains at most $r$ vertices,
$|\mathcal{E'}| \le (18k^3r)^r$. 
Hence it is sufficient to show that $(V',\mathcal{E}',k)$ admits a $k$-mini test cover if and only if $(V,\mathcal{E},k)$ admits one.

Obviously, if $(V',\mathcal{E}',k)$ admits a $k$-mini test cover, this is a $k$-mini test cover for $(V,\mathcal{E},k)$ too.
For the other direction, suppose $\mathcal{T}$ is a $k$-mini test cover for $(V,\mathcal{E},k)$ such that $\mathcal{T}\setminus \mathcal{E}'$ is as small as possible.
For the sake of contradiction, suppose that $\mathcal{T}$ contains at least one edge $e$ in $\mathcal{T} \setminus \mathcal{E}'$.
We claim that it is possible to construct a set $\mathcal{T}'''$ which induces at least $|\mathcal{T}'''|+k$ classes, 
such that
${\cal T}''' \setminus{\cal E}' = ({\cal T} \setminus{\cal E}')\setminus \{e\}$.
By applying Lemma \ref{mini} to the hypergraph $H'$ with edge set $\mathcal{T}'''$, and vertex set formed by identifying the vertices in each class induced by $\mathcal{T}'''$, 
observe that there exists a $k$-mini test cover in $H'$. Observe that the edges of this $k$-mini test cover in $H'$ also form a $k$-mini test cover in the original instance, and this $k$-mini test cover is a subcollection of  $\mathcal{T}'''$. Since this $k$-mini test cover contains fewer edges from 
$\mathcal{E} \setminus \mathcal{E'}$ than $\mathcal{T}$ does, we have a contradiction.

Start with $\mathcal{T}'=\mathcal{T}\setminus\{e\}$.
Since $e$ is not in $\mathcal{E}'$, $e$ must be in $\mathcal{G}$, and the $G$-portion of $e$ must not be contained in any marked component. 
Furthermore, for each $C_i,C_j \in {\cal C}$ with $C_i \subseteq e$ and $e \cap C_j = \emptyset$ we note that $E_{i,j}$ must contain  $2k$ edges, as
otherwise $e$ would be in $\mathcal{E}'$. 
 Similarly, for each $C_i$ contained in $e$ we note that $E_i$ must contain $2k+1$ edges.
For any $i,j$ such that $|E_{i,j}|=2k$, let $e_{i,j}$ be an edge in $E_{i,j}$ 
whose $G$-portion is disjoint  
from any edge in $\mathcal{T}'$. 
This must exist as
 $|{\cal T}'| \le 2k-1$. 

For any $i$ such that $|E_i|=2k+1$ let $e_i, e_i'$ be edges 
in $E_i$ whose $G$-postions 
are disjoint from any edge in $\mathcal{T}'$. 
These edges must exist as $|{\cal T}'| \le 2k-1$.

Let $C^*_0$ be the class induced by $\mathcal{T}'$ that consists of all vertices not in any edge in ${\mathcal T}'$ (which exists by Claim C below).
We will need the following claims.

\2

\begin{description}
\item[Claim A:] There is at most one class $C_G^*$ induced by $\mathcal{T}'$, such that $G \cap (C_G^* \cap e) \not= \emptyset$ and
$G \cap (C_G^* \setminus e) \not= \emptyset$.

{\em Proof of Claim A:}  For the sake of contradiction assume that there are two such classes $C_G'$ and $C_G''$. 
This implies that there exist vertices $x' \in G \cap (C_G' \cap e)$,
$y' \in G \cap (C_G' \setminus e)$, $x'' \in G \cap (C_G'' \cap e)$ and $y'' \in G \cap (C_G'' \setminus e)$. 
Some edge $e' \in \mathcal{T}'$ separates $C_G'$ and $C_G''$.
Note that adding $e'$ and $e$ to ${\cal F}$ separates $x'$, $y'$, $x''$ and $y''$ into different classes, contradicting Part 4 of Lemma~\ref{lem:findF}.  
This contradiction complete the proof of Claim A.

\2

\item[Claim B:] For each edge $e'$ that cuts $G$ and every $C_i$ we have $C_i \subseteq e'$ or $C_i \cap e'=\emptyset$. In particular, $C_i \subseteq e$ or $C_i \cap e=\emptyset$.

{\em Proof of Claim B:}  If Claim~B is false then there exist $x \in C_i \cap e'$ and $y \in C_i \setminus e'$.
So adding $e'$ to ${\cal F}$ cuts $G$ (as $G$ is not a subset of any edge and $e'$ contains vertices from $G$)  and $C_i$, a contradiction to Part 3 in Lemma~\ref{lem:findF}.

\2

\item[Claim C:] $C^*_0$ exists and $|G \cap C_0^*| \geq (3k+2)r$.

{\em Proof of Claim C:} If $C^*_0$ does not exist then every vertex of $V$ belongs to some edge in ${\cal T}'$, which implies that $|V| \leq 2kr$, so 
$C^*_0$ does exist. If  $|G \cap C_0^*| < (3k+2)r$, then the following holds and we have a contradiction  to the assumption on $|V|$ in the beginning of the proof:

$|V| \leq 2kr + |C_0^*| \leq 2kr + (|C| + |G \cap C_0^*|) <  2kr + 2kr + (3k+2)r = (7k+2)r.$
\end{description}

By Claim A there exists at most one class, say $C_G^*$, induced by $\mathcal{T}'$ that is cut by $e$ and only contains vertices from $G$.
Let $C^*_1, \dots, C^*_t$ be all classes induced by $\mathcal{T}'$, different from $C_G^*$ and $C_0^*$, that are cut by $e$. Note that $t \leq 3k$, as ${\cal T}$ is a $k$-mini test cover.
Each $C^*_s$, ($1 \leq s \leq t$), must be contained in an edge, say $e_s^*$, in $\mathcal{T}'$ and contain vertices from  $C$, by the definitions on $C_G^*$ and $C_0^*$.
We are going to create a collection of edges $\mathcal{T}''$ 
such that each $C^*_s$ is cut by an edge in $\mathcal{T}''$  
and also $\mathcal{T}''$ induces $|\mathcal{T}''|$ extra classes
 in $C_0^*$. Initially let $\mathcal{T}''=\emptyset$.
For each $s\in [t]$ in turn, consider the following two cases.

\2

{\em Case 1:} For some $i\neq j$, $e$ contains $C_i$ but not $C_j$ and 
$C_i \cap C_s^* \neq \emptyset, C_j \cap C_s^* \neq \emptyset$.

In this case 
observe that $|E_{i,j}| = 2k$, as otherwise $e$ would be marked. 
Then add the edge $e_{i,j}$ to $\mathcal{T}''$, if $e_{i,j}$ is not in $\mathcal{T}''$ already.
Note that $e_{i,j}$ separates $C_i$ from $C_j$ and 
therefore cuts $C_s^*$, and
also creates an 
extra class in $C_0^*$, as desired.

\2

{\em Case 2:} Case 1 does not hold. That is, $C \cap C_s^* \subseteq e$ or $(C \cap C_s^*) \cap e = \emptyset$.

  Recall that there exists a 
$C_i$ such that $C_s^*$ contains vertices from $C_i$
and, since $e$ cuts $C_s^*$, we have $C_s^*\cap G\neq\emptyset$.
Suppose $e$ does 
not contain $C_i$. Then $(C \cap C_s^*) \cap e = \emptyset$ and, since $e$ cuts $C_s^*$, it must contain vertices 
from $C_s^* \cap G \subseteq e_s^* \cap G$. Then  $e_s^*$ cuts $G$ and the $G$-portion of $e_s^*$ is in the same component as the $G$-portion of $e$, and therefore $e_s^*$ is an unmarked edge. Furthermore since $e_s^*$ cuts $G$ it does not cut $C_i$, and therefore $C_i \subseteq e_s^*$. Thus, we have that either $e$ or $e_s^*$ is an unmarked edge containing $C_i$, and therefore $|E_i|=2k+1$. Then add $e_i$ to $\mathcal{T}''$, if $e_i$ is not already in $\mathcal{T}''$.
Observe that $e_i$ cuts $C_s^*$ as it contains
vertices in $C_i \cap C_s^*$ but no
vertex from $C_s^* \cap G$, and $e_i$ creates an extra class in $C_0^*$, as required.

\2

This completes Case 1 and Case 2.
Note that the edges in $\mathcal{T}''$ all have vertex disjoint $G$-portions, as they are in
distinct $E_{i,j}$'s and $E_i$'s.
 We now consider Case~(i) and Case~(ii) below, which will complete the proof.

\2

{\em Case (i):} $C_G^*$ does not exist or is equal to $C_0^*$ or $e$ does not cut $C_0^*$ or does not cut  $C_G^*$.
 
In this case $e$ cuts at most $t+1$ classes induced by $\mathcal{T}'$. 
Note that if we add the 
edges from $\mathcal{T}''$ to $\mathcal{T}'$, each edge in $\mathcal{T}'$ increase the number of classes in 
$C_0^*$ by  
at least one. Also note that for every $s \in [t]$ some edge in $\mathcal{T}''$ cuts $C^*_s$.
So let $\mathcal{T}''' = \mathcal{T}' \cup \mathcal{T}'' = (\mathcal{T}\setminus{e}) \cup \mathcal{T}''$.
Removing $e$ from ${\cal T}$ decreases the number of classes by at most $t+1$ and
adding  $\mathcal{T}''$ increases the number of classes by at least $t+|\mathcal{T}''|$. So by increasing the number of edges by  $|\mathcal{T}''|-1$ we have increased the number of classes
by at least $|\mathcal{T}''|-1$ and therefore we still have at least $k$ more classes than edges.

\2

{\em Case (ii):} Case (i) does not hold. That is, $C_G^*$ exists and is distinct from $C_0^*$ and $e$ cuts both $C_G^*$ and $C_0^*$.

By Claim~A we note that $e$ either contains all of $C_0^* \cap G$ or none of $C_0^* \cap G$. By Claim~C  $e$ must contain none of 
$C_0^* \cap G$. As $e$ cuts $C_0^*$ 
we must have $C \cap e \cap C_0^* \not= \emptyset$.
Therefore there exists $C_i$ such that $e$ contains vertices from $C_i \cap C_0^*$,
and so $|E_i|=2k+1$. 
Add $e_i$ and $e_i'$ to $\mathcal{T}''$ (unless $e_i$ is already in $\mathcal{T}'$, 
 in which case just add $e_i'$ to $\mathcal{T}''$).
Observe that the $G$-portions of $e_i$ and $e_i'$ are vertex disjoint by
construction, and so the $G$-portions of all 
 edges in $\mathcal{T}''$ are still vertex disjoint.

Note that adding $e_i$ and $e_i'$ to ${\cal T}'$ creates three new  classes in $C_0^*$ ($C_0^*$ now being split into the class $C_0^* \cap e \cap e'$ which contains vertices from $C_i$, the $G$-portion
of $e_i$, the $G$-portion of $e_i'$ and the class of vertices not in any edge). 
Adding each other edge from $\mathcal{T}''$ to $\mathcal{T}'$ increases the  
number of classes in $C_0^*$ by one
(as by Claim~C we note that some
vertex in $G \cap C_0^*$ is not contained in any edge in $\mathcal{T}''$).
Also note that for every $s\in [t]$
some edge in $\mathcal{T}''$ cuts $C^*_s$.

 So let $\mathcal{T}''' = \mathcal{T}' \cup \mathcal{T}'' = (\mathcal{T}\setminus{e}) \cup \mathcal{T}''$.
Removing $e$ from ${\cal T}$ decreases the number of classes by $t+2$ and
adding $\mathcal{T}''$ increases the number of classes by at least $t+|\mathcal{T}''|+1$. So by increasing the number of edges by $|\mathcal{T}''|-1$ we have increased the number of classes
by at least $|\mathcal{T}''|-1$ and therefore we still have at least $k$ more classes than edges.
\end{proof}

\section{Discussion}\label{sec:d}

The two main results proved in this paper are the existence of
polynomial-size kernels for {\sc Test-$r$-Cover}$(m-k)$ and {\sc
Test-$r$-Cover}$(n-k)$. In fact, our result for {\sc TestCover}$(m-k)$ is
stronger: {\sc TestCover}$(m-k)$ has a polynomial-size kernel for the
parameter $k+r$. It would be interesting to find out whether {\sc
TestCover}$(n-k)$ has a polynomial-size kernel for the parameter $k+r$.
In addition, Theorem \ref{thm:m-kfpt} gives an algorithm for {\sc TestCover}$(m-k)$
with running time $(f(r))^{O(k)}(n+m)^{O(1)}$; it would be interesting to see
if an algorithm with similar running time can be found for 
{\sc TestCover}$(n-k)$.

\section{Acknowledgments}

We are grateful to Manu Basavaraju and Mathew Francis for carefully reading an earlier version of this paper and informing us of a subtle flaw in Theorem \ref{thm:thm8} which led us to changing the proof substantially.

\end{document}